\newtheorem{lemma}{Lemma}
\newtheorem{proposition}{Proposition}
\newtheorem{definition}{Definition}
\newcommand{\red}[1]{{\color{red}#1}}
\newcommand*\Eval[3]{\left.#1\right\rvert_{#2}^{#3}}
\newcommand{\uvec}[1]{\boldsymbol{\hat{\mathbf{#1}}}}
\newcommand\Ra{\mbox{\textit{R}}}  
\newcommand{\abs}[1]{\left\vert #1 \right\vert}
\newcommand{\dVol}{{\rm d}\boldsymbol{x}}
\newcommand{\quinopt}{{\sc quinopt}}
\newcommand{\sdpagmp}{{\sc sdpa-gmp}}
\newcommand{\sparsecolo}{{\sc SparseCoLO}}
\definecolor{matlabblue}{RGB}{0,113,188}
\definecolor{matlabred}{RGB}{216,82,24}
\definecolor{matlabgray}{RGB}{128,128,128}
\definecolor{matlabgreen}{rgb}{0.4660    0.6740    0.1880} 
\definecolor{matlabyellow}{rgb}{0.93      0.69      0.13} 
\definecolor{matlabpurple}{rgb}{0.49      0.18      0.56}
\definecolor{colorbar1}{rgb}{1.000000,0.909091,0.000000}
\definecolor{colorbar2}{rgb}{1.000000,0.818182,0.000000}
\definecolor{colorbar3}{rgb}{1.000000,0.727273,0.000000}
\definecolor{colorbar4}{rgb}{1.000000,0.636364,0.000000}
\definecolor{colorbar5}{rgb}{1.000000,0.545455,0.000000}
\definecolor{colorbar6}{rgb}{1.000000,0.454545,0.000000}
\definecolor{colorbar7}{rgb}{1.000000,0.363636,0.000000}
\definecolor{colorbar8}{rgb}{1.000000,0.272727,0.000000}
\definecolor{colorbar9}{rgb}{1.000000,0.181818,0.000000}
\definecolor{colorbar10}{rgb}{1.000000,0.090909,0.000000}
\definecolor{colorbar11}{rgb}{1.000000,0.000000,0.000000}
\definecolor{colorbar12}{rgb}{0.909091,0.000000,0.000000}
\definecolor{colorbar13}{rgb}{0.818182,0.000000,0.000000}
\definecolor{colorbar14}{rgb}{0.727273,0.000000,0.000000}
\definecolor{colorbar15}{rgb}{0.636364,0.000000,0.000000}
\definecolor{colorbar16}{rgb}{0.545455,0.000000,0.000000}
\definecolor{colorbar17}{rgb}{0.454545,0.000000,0.000000}
\definecolor{colorbar18}{rgb}{0.363636,0.000000,0.000000}
\definecolor{colorbar19}{rgb}{0.272727,0.000000,0.000000}
\definecolor{colorbar20}{rgb}{0.181818,0.000000,0.000000}
\definecolor{colorbar21}{rgb}{0.090909,0.000000,0.000000}
\definecolor{grey}{rgb}{0.6,0.6,0.6}
\newcommand\solidrule[1][10pt]{\rule[0.5ex]{#1}{1.5pt}}
\newcommand\dashedrule{\mbox{%
		\solidrule[2pt]\hspace{2pt}\solidrule[2pt]\hspace{2pt}\solidrule[2pt]}}
\newcommand\dotdashedrule{\mbox{%
		\solidrule[3pt]\hspace{1.5pt}\solidrule[1pt]\hspace{1.5pt}\solidrule[3pt]}}
\newcommand\dottedrule{\mbox{%
		\solidrule[1pt]\hspace{1pt}\solidrule[1pt]\hspace{1pt}\solidrule[1pt]\hspace{1pt}\solidrule[1pt]\hspace{1pt}\solidrule[1pt]\hspace{1pt}}}
\newcommand{\mysquare}[1]{%
	\protect\begin{tikzpicture}%
	\protect\draw[thick,color=#1] (0,0) -- (0.75ex,0) -- (0.75ex,0.75ex) -- (0,0.75ex) -- (0,0);
	\protect\end{tikzpicture}%
}
\newcommand{\mytriangle}[1]{%
	\protect\begin{tikzpicture}%
	\protect\draw[thick,color=#1] (0,0) -- (1ex,0) -- (0.5ex,0.866ex) -- (0,0);
	\protect\end{tikzpicture}%
}
\newcommand{\mycross}[1]{%
	\protect\begin{tikzpicture}%
	\protect\draw[thick,color=#1] (0,0) -- (1ex,1ex);
	\protect\draw[thick,color=#1] (0,1ex) -- (1ex,0);
	\protect\end{tikzpicture}%
}
\title{Bounds on heat transport for convection driven by internal heating}
\author{Ali Arslan\aff{1}
  \corresp{\email{a.arslan18@imperial.ac.uk}},
  Giovanni Fantuzzi\aff{1},
  John Craske\aff{2}
 \and Andrew Wynn\aff{1}}
\affiliation{\aff{1}Department of Aeronautics, Imperial College London
\aff{2}Department of Civil and Envirnonmental Engineering, Imperial College London}
\begin{document}
\maketitle

\begin{abstract}
The mean vertical heat transport $\langle wT \rangle$ in convection between isothermal plates driven by uniform internal heating is investigated by means of rigorous bounds. These are obtained as a function of the Rayleigh number \Ra\  by constructing feasible solutions to a convex variational problem, derived using a formulation of the classical background method in terms of a quadratic auxiliary function. When the fluid's temperature relative to the boundaries is allowed to be positive or negative, numerical solution of the variational problem shows that best previous bound $\langle wT \rangle \leq 1/2$ \citep[][\emph{Physics Letters A}, {vol. 377},  pp.83-92]{goluskin2012convection} can only be improved up to finite \Ra. Indeed, we demonstrate analytically that $ \langle wT \rangle \leq 2^{-21/5} \Ra^{1/5}$ and therefore prove that $\langle wT\rangle< 1/2$ for $\Ra < 65\,536$. However, if the minimum principle for temperature is invoked, which asserts that internal temperature is at least as large as the temperature of the isothermal boundaries, then numerically optimised bounds are strictly smaller than $1/2$  until at least $\Ra=3.4\times 10^{5}$. While the computational results suggest that the best bound on $\langle wT\rangle$ approaches $1/2$ asymptotically from below as $\Ra\rightarrow \infty$, we prove that typical analytical constructions cannot be used to prove this conjecture.
\end{abstract}

\begin{keywords}
Internally heated convection, turbulent convection, variational methods
\end{keywords}
\section{Introduction}
\label{sec:intro}

Internally-heated (IH) convection, in which the motion of a fluid is driven by buoyancy forces caused by internal sources of heat, is found in a wide variety of natural and built environments, and plays an essential role in disciplines such as geophysics and astrophysics. For example, radioactive decay drives convection in the Earth's mantle, which in turn influences plate tectonics and the planet's magnetic field \citep{bercovici2011mantle}. A similar mechanism explains geological patterns on the surface of Pluto \citep{trowbridge2016vigorous}, and buoyancy flows due to the absorption of solar radiation induce atmospheric turbulence on Venus \citep{tritton1975internally}.

Internal heating generalises thermal forcing at the boundaries typical of Rayleigh--B\'enard models of convection in both a theoretical and a practical sense, because internal heat sources can be concentrated near boundaries to produce the latter \citep{BouVjfm2019a}. The dynamics of IH convection are also closely related, and sometimes equivalent, to those of flows driven by internal sources of buoyancy besides temperature, such density stratification due to electromagnetic forces or chemical concentration differences \citep{goluskin2016internally}.
IH convection therefore warrants study in its own right to enhance fundamental understanding of buoyancy-driven turbulence, yet has received relatively little attention in comparison with  Rayleigh--B\'enard convection. 

A fundamental challenge in the study of IH convection, along with many other turbulent flows, is to characterise the flow's statistical properties as a function of its control parameters. Following previous work \citep{goluskin2012convection,goluskin2016internally}, we consider this problem in the idealized configuration illustrated in Figure~\ref{fig:config}, where a horizontal layer of fluid between isothermal plates of equal temperature is heated uniformly at a constant rate. The only control parameters for this setting are the Prandtl number of the fluid, \Pran, and the Rayleigh number based on the internal heating rate, \Ra. Particular statistical quantities of interest are the dimensionless mean temperature, $\langle T \rangle$, and the dimensionless mean vertical convective heat flux, $\langle wT \rangle$, where the mean is obtained by averaging over volume and infinite time. 

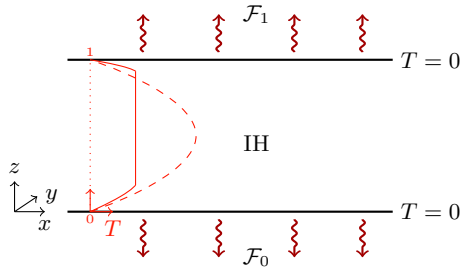
\begin{figure}
    \centering
    \begin{tikzpicture}[every node/.style={scale=0.95}]
    \draw[black,thick] (-2,0) -- (2.3,0) ;
    \draw[black,thick] (-2,2) -- (2.3,2) ;  
    \draw[->] (-2.7,0) -- (-2.3,0) node [anchor=north] {$x$};
    \draw[->] (-2.7,0) -- (-2.4,0.2) node [anchor=west] {$y$};
    \draw[->] (-2.7,0) -- (-2.7,0.4) node [anchor=south] {$z$};
    \draw[dotted,colorbar10] (-1.7,0) -- (-1.7,2);
    \draw [dashed,colorbar10] plot [smooth, tension =1] coordinates {(-1.7,0) (-0.3,1) (-1.7,2)};
    \draw[->,colorbar10] (-1.7,0) -- (-1.4,0) node [anchor=north] {$T$} ;
    \draw[->,colorbar10] (-1.7,0) -- (-1.7,0.3) ;
    \node[colorbar10] at (-1.7,-0.1) {\tiny $0$};
    \node[colorbar10] at (-1.7,2.1) {\tiny $1$};
    \draw [colorbar10] plot [smooth,tension=1] coordinates {(-1.7,2) (-1.3,1.92) (-1.1,1.85)};
    \draw [colorbar10] (-1.1,1.85) -- (-1.1,0.35);
    \draw [colorbar10] plot [smooth, tension =1] coordinates {(-1.1,0.35) (-1.3,0.2) (-1.7,0)};
    \node at (0.5,-0.6) {$ \mathcal{F}_0 $};
    \node at (0.5,2.6) {$  \mathcal{F}_1 $};
    \node at (2.8,0) {$ T = 0 $};
    \node at (2.8,2) {$  T = 0 $};
    \node at (0.5,0.9) {$ \textrm{IH}$};
    \draw [->,colorbar15,line join=round,decorate, decoration={snake, segment length=4, amplitude=.9,post=lineto, post length=2pt}, thick]  (-1,-0.1) -- (-1,-0.6);
    \draw [->,colorbar15,line join=round,decorate, decoration={snake, segment length=4, amplitude=.9,post=lineto, post length=2pt}, thick]  (0,-0.1) -- (0,-0.6);
    \draw [->,colorbar15,line join=round,decorate, decoration={snake, segment length=4, amplitude=.9,post=lineto, post length=2pt}, thick]  (1,-0.1) -- (1,-0.6);
    \draw [->,colorbar15,line join=round,decorate, decoration={snake, segment length=4, amplitude=.9,post=lineto, post length=2pt}, thick]  (1.9,-0.1) -- (1.9,-0.6);
    \draw [->,colorbar15,line join=round,decorate, decoration={snake, segment length=4, amplitude=.9,post=lineto, post length=2pt}, thick]  (-1,2.1) -- (-1,2.6);
    \draw [->,colorbar15,line join=round,decorate, decoration={snake, segment length=4, amplitude=.9,post=lineto, post length=2pt}, thick]  (0,2.1) -- (0,2.6);
    \draw [->,colorbar15,line join=round,decorate, decoration={snake, segment length=4, amplitude=.9,post=lineto, post length=2pt}, thick]  (1,2.1) -- (1,2.6);
     \draw [->,colorbar15,line join=round,decorate, decoration={snake, segment length=4, amplitude=.9,post=lineto, post length=2pt}, thick]  (1.9,2.1) -- (1.9,2.6);
    \end{tikzpicture}
    \caption{ Schematic diagram for convection driven by unit uniform internal heat generation, IH, between two isothermal parallel plates. The averaged heat fluxed through the top and bottom plates are denoted by $\mathcal{F}_1$ and $\mathcal{F}_0$, respectively. Red lines illustrate the temperature profile in the conductive regime ({\color{colorbar10}\dashedrule}) and a typical mean temperature profile in the turbulent regime ({\color{colorbar10}\solidrule}).}
    \label{fig:config}
\end{figure}

The dimensionless mean temperature $\langle T\rangle$ corresponds to the amount of thermal dissipation in the fluid and can be related qualitatively to the proportion of heat within the fluid that is transported by conduction, rather than by convection. As described by \citet[][p.87]{goluskin2012convection}, the average outward conduction above and below the plane over which the time- and plane-averaged temperature $\overline{T}$ is maximised is equal to $2\overline{T}$. 
If one assumes that at high \Ra\ the temperature field is well mixed, then $\langle T\rangle$ scales in the same way as the maximum of $\overline{T}$. The ratio of the total (predominantly convective) heat flux to the conductive heat flux, which corresponds to a Nusselt number, is therefore $1/(2\overline{T}) \sim 1/\langle T\rangle$.
In contrast, $\langle wT\rangle$ quantifies the vertical asymmetry caused by the fluid's motion and is related to the heat fluxes $\mathcal{F}_1$ and $\mathcal{F}_0$ through the top and bottom boundaries by the exact relations~\citep{goluskin2016internally} 
%
%
\begin{equation}\label{e:boundary-fluxes-vs-wT}
\mathcal{F}_1 = \frac12 + \langle wT \rangle, 
\qquad
\mathcal{F}_0 = \frac12 - \langle wT \rangle.
\end{equation}

Laboratory experiments \citep{kulacki1972thermal,Jahn1974,Mayinger1976,kakac1985natural,Lee2007} and direct numerical simulations \citep{Peckover1974,Straus1976,Tveitereid1978,emara1980,Worner1997,goluskin2012convection,goluskin2016penetrative} indicate that the dimensional mean temperature increases sublinearly with the heating rate, which, in nondimensional terms, implies that $\langle T \rangle$ decreases with $\Ra$. 
Scaling arguments for Rayleigh-B\'{e}nard convection \citep{GroSjfm2000a} can be applied to the top boundary layer of IH convection \citep{goluskin2012convection,Wang2020} and imply that $\langle T \rangle \sim \Pran^{-1/3}\Ra^{-1/3}$ when $\Pran \lesssim \Ra^{-1/4}$ and $\langle T \rangle \sim \Ra^{-1/4}$ otherwise. 
The dependence of these predictions on the Rayleigh number agrees with rigorous lower bounds on $\langle T \rangle$ for both finite and infinite $Pr$ \citep{lu2004bounds,whitehead2011internal} up to logarithmic corrections. However, as mentioned in \cite{goluskin2012convection} and discussed in \S\ref{sec:heuristics} of this paper, understanding the scaling of $\langle wT\rangle$ in IH convection requires additional information pertaining to the bottom boundary layer.

In contrast to $\langle T\rangle$, the behaviour of the mean vertical convective heat flux $\langle wT \rangle$ remains fascinatingly unclear. While $\langle wT \rangle$ appears to increase with \Ra\ in experiments \citep{kulacki1972thermal} and three-dimensional simulations \citep{goluskin2016penetrative}, it displays little variation with respect to \Ra\ and non-monotonic behaviour in two-dimensional simulations~\citep{goluskin2012convection,Wang2020}. Power-law fits to experimental and numerical data summarised by \cite{goluskin2016internally} suggest that
\begin{equation}\label{e:wt-conjecture}
    \langle wT \rangle \approx \frac12 - \sigma \Ra^{-\alpha}\, ,
\end{equation}
with prefactors $\sigma \approx 1$ and exponents $\alpha$ between $0.05$ and $0.1$. Physical theories corroborating this power-law behaviour are lacking and the best rigorous mathematical result available to date is the uniform bound
$\langle wT \rangle \leq 1/2$ \citep{goluskin2012convection}.


This work reports new \Ra-dependent upper bounds on $\langle wT \rangle$, some obtained computationally and some proved analytically, the derivation of which relies on two key ingredients. The first is a modern interpretation of the background method \citep{constantin1995variational,doering1994variational,doering1996variational} as a particular case of a broader framework for bounding infinite-time averages \citep{chernyshenko2014polynomial,Fantuzzi2016siads,chernyshenko2017relationship,tobasco2018optimal,goluskin2019ks,rosa2020optimal}. This interpretation makes it possible to formulate a variational bounding principle for $\langle wT \rangle$ even though, contrary to most past applications of the background method to convection problems \citep{doering1996variational,Constantin1996,Doering1998,constantin1999infinitePr,doering2001upper,Otero2002,Yan2004,Otero2004,doering2006bounds,whitehead2011ultimate,whitehead2012slippery,Goluskin2015,goluskin2016rough}, this quantity is not directly related to the thermal dissipation. 
The second key ingredient is a minimum principle, already invoked by~\cite{goluskin2012convection} and proved in Appendix~\ref{sec:proof_T_positive}, stating that the temperature of the fluid is either no smaller than that of the top and bottom plates, or approaches such a state exponentially quickly. Similar results have proved extremely useful in the context of Rayleigh--B\'enard convection \citep{constantin1999infinitePr,Yan2004,otto2011rayleigh,goluskin2016rough,choffrut2016upper} and, as we shall demonstrate, appear essential for proving \Ra-dependent bounds on $\langle wT \rangle$ for IH convection at large $\Ra$.

The rest of this work is structured as follows. Section~\ref{sec:model} describes the flow configuration and the corresponding governing equations. Heuristic scaling arguments for the mean vertical heat flux are presented in \S\ref{sec:heuristics}. In \S\ref{sec:fomulating_bound}, we derive two variational principles to bound $\langle wT \rangle$ rigorously from above: one that does not consider the minimum principle for the temperature, and one that enforces it by means of a Lagrange multiplier. Computational and analytical bounds obtained with the former are presented in \S\ref{sec:results}, while bounds obtained numerically with the latter are discussed in \S\ref{sec:positive_temp}, along with obstacles to analytical constructions. Section~\ref{sec:conclusion} offers concluding remarks.

Throughout the paper, overlines indicate averages over the horizontal directions and infinite time, while
angled brackets denote averages over the dimensionless volume $\Omega = [0,L_x]\times [0,L_y] \times [0,1]$ and infinite time. Precisely, for any scalar-valued function $f(\boldsymbol{x},t)$,
\begin{subequations}
	\begin{gather}
	\overline{f} = \limsup_{\tau\to\infty} \frac{1}{\tau L_{x}L_{y}}\int_{0}^{\tau} \int_{0}^{L_{x}}\int_{0}^{L_{y}}  
	f(\boldsymbol{x},t)\, \textrm{d}x \,\textrm{d}y \,\textrm{d}t,\\
	\langle f \rangle = \limsup_{\tau\to\infty} \frac{1}{\tau} \int_{0}^{\tau} \fint_{\Omega} f(\boldsymbol{x},t)\, \dVol \,\textrm{d}t,
	\end{gather}
\end{subequations}
where $\fint_\Omega (\cdot) \dVol = (L_xL_y)^{-1} \int_\Omega (\cdot) \dVol$ is the spatial average. Note that $\overline{f}=\overline{f}(z)$ depends only on the vertical coordinate $z$. We also write $\lVert f \rVert_{2}$ and $\lVert f \rVert_{\infty}$ for the usual $L^{2}$ and $L^{\infty}$ norms of univariate functions $f(z)$ on the interval $[0,1]$. Derivatives of univariate functions with respect to $z$ will be denoted by primes.

\section{Governing equations}
\label{sec:model}
We consider a layer of fluid confined between two no-slip plates that are separated by a vertical distance $d$ and are held at the same constant temperature, which may be taken to be zero without loss of generality. The fluid has density $\rho$, kinematic viscosity $\nu$, thermal diffusivity $\kappa$, thermal expansion coefficient $\alpha$, specific heat capacity $c_p$, and is heated uniformly at a volumetric rate $Q$. This corresponds to the configuration denoted by IH1 in \cite{goluskin2016internally}.  To simplify the discussion we assume that the layer is periodic in the horizontal ($x$ and $y$) directions with periods $L_x d$ and $L_y d$, but all results presented in \S\ref{sec:results} and \S\ref{sec:positive_temp} will be independent of the domain aspect ratios $L_x$ and $L_y$. 

To make the problem nondimensional we use $d$ as the length scale, $d^2/\kappa$ as the time scale and $d^2 Q/\kappa$ as the temperature scale. Under the Boussinessq approximation, the Navier--Stokes equations governing the motion of the fluid in the nondimensional domain $\Omega = [0,L_x] \times [0,L_y] \times [0,1]$ are \citep{goluskin2016internally}
\begin{subequations}
\label{e:governing-equations}
\begin{align}
    \bnabla \cdot \boldsymbol{u} &= 0\, , \label{continuit} \\
    \partial_t \boldsymbol{u} 
    + \boldsymbol{u}\cdot \bnabla \boldsymbol{u} + \bnabla p &= \Pran ( \bnabla^{2}\boldsymbol{u} + \Ra T \uvec{z})\, , \label{nondim_momentum} \\
    \partial_t T 
    + \boldsymbol{u}\cdot \bnabla T  &=  \bnabla^{2}T + 1,
    \label{nondim_energy}
\end{align}
\end{subequations}
with boundary conditions
\begin{subequations}
	\label{e:boundary-conditions-all}
	\begin{align}
	\Eval{\boldsymbol{u}}{z=0}{} = \Eval{\boldsymbol{u}}{z=1}{} =0, & \label{bc_velocity}\\
	\Eval{T}{z=0}{} = \Eval{T}{z=1}{} = 0.&  \label{bc_T_IH1} 
	\end{align}
\end{subequations}
The dimensionless Prandtl and Rayleigh numbers are defined as
\begin{equation}
    \Pran = \frac{\nu}{\kappa}, \qquad
    \Ra = \frac{g \alpha Q d^{5}}{\rho c_p \nu \kappa^{2}}. 
\end{equation}
The former measures the ratio of momentum and heat diffusivity and is a property of the fluid, while the latter measures the destabilising effect of internal heating compared with the stabilising effect of diffusion and is the control parameter in this study.
 
For any value of \Ra\ and \Pran, equations~{(\ref{e:governing-equations}a--c)} admit the steady solution, $\boldsymbol{u}=\boldsymbol{0}$, $T=\frac12 z(1-z)$, which represents a purely conductive state. This solution is globally asymptotically stable for any values of the horizontal periods $L_x$ and $L_y$ when $\Ra < 26\,926$ \citep{goluskin2016internally} and is linearly unstable when $\Ra$ is larger than a critical threshold $\Ra_L \approx 37\,325$ \citep{Debler1959}, the exact value of which depends on the horizontal periods. Sustained convection ensues in this regime, but has also been observed at subcritical Rayleigh numbers \citep{Tveitereid1978}. 
Our goal is to characterize the mean vertical convective heat flux through the layer, $\langle wT \rangle$, as a function of \Ra.

\section{Heuristic scaling arguments}\label{sec:heuristics}

Phenomenological predictions for the variation of $\langle wT \rangle$ with the Rayleigh number can be derived by coupling the total heat budget through the layer with scaling assumptions for characteristic length scales $\delta_{\scriptscriptstyle T}$ and $\varepsilon_{\scriptscriptstyle T}$ of the
lower and upper thermal boundary layers, respectively. These length scales can be defined such that
$\langle T \rangle/\delta_{\scriptscriptstyle T} = \mathcal{F}_0$ and
$\langle T \rangle/\varepsilon_{\scriptscriptstyle T} = \mathcal{F}_1$. 
Averaging~\eqref{nondim_energy} over space and infinite time
indicates that $\delta_{\scriptscriptstyle T}$ and
$\varepsilon_{\scriptscriptstyle T}$ satisfy
\begin{equation}
  \frac{\langle T \rangle}{\delta_{\scriptscriptstyle T}}+\frac{\langle T \rangle}{\varepsilon_{\scriptscriptstyle T}}
  =1,
  \label{eq:scaling_budget}
\end{equation}
while the second identity in~\eqref{e:boundary-fluxes-vs-wT} yields
\begin{equation}
    \langle wT\rangle = \frac12  - \frac{\langle T \rangle}{\delta_{\scriptscriptstyle T}}.
\end{equation}
For the sake of definiteness, assume that the mean temperature and $\delta_{\scriptscriptstyle T}$ decay as power laws in \Ra, that is, $\langle T\rangle= \Ra^{-\alpha_{0}}/\sigma_{0}$ and $\delta_{\scriptscriptstyle T} = \Ra^{-\alpha_1}/\sigma_1$ with
$\alpha_{0},\alpha_1\geq 0$. If $\langle wT\rangle$ approaches a constant as $\Ra$ is raised, 
then $\langle T\rangle \leq O(\delta_{\scriptscriptstyle T})$ and $\alpha_1 \leq \alpha_0$, the inequality being strict if
$\langle wT\rangle \rightarrow 1/2$. Moreover, \eqref{eq:scaling_budget} implies that
\begin{equation}
    \frac{1}{\varepsilon_{\scriptscriptstyle T}} = \sigma_{0}\Ra^{\alpha_{0}}-\sigma_{1}\Ra^{\alpha_{1}}.
\end{equation}
The scalings behind IH convection with the isothermal boundary conditions \eqref{e:boundary-conditions-all} are therefore necessarily subtle, because the leading scaling of $\varepsilon_{\scriptscriptstyle T}$ (hence, of $\langle T \rangle$) and the correction implied by $\delta_{\scriptscriptstyle T}$ both play a crucial role. 
Any heuristic argument therefore needs to distinguish between the physics associated with the unstably stratified flow near the upper boundary from the (very different) stably stratified flow near the lower boundary.
In particular, one must determine whether $\langle T\rangle$ reduces at the same rate as $\delta_{T}$, meaning that $\langle wT\rangle$ tends to a constant value in the range $[0,1/2)$ determined by the relative magnitude of the prefactors $\sigma_0$ and $\sigma_1$, or slightly faster, implying that $\langle wT\rangle$ approaches $1/2$ as the Rayleigh number is raised.
%


As noted by \cite{whitehead2011internal}, one way to derive a scaling for $\varepsilon_{\scriptscriptstyle T}$ is to assume that the upper boundary layer maintains a state of marginal stability \citep{malkus1954heat,priestley1954vertical}. In this case, $\varepsilon_{\scriptscriptstyle T}$ adjusts itself such that the local Rayleigh number $Ra_{\varepsilon_{\scriptscriptstyle T}}$, based on the average temperature and depth of the upper boundary layer, remains constant.
Expressing $Ra_{\varepsilon_{\scriptscriptstyle T}}$ in terms of $\Ra$ to leading order, we conclude that
\begin{equation}
  Ra_{\varepsilon_{\scriptscriptstyle T}}=\langle T\rangle \varepsilon_{\scriptscriptstyle T}^{3}\Ra\sim \sigma_{0}^{4}\Ra^{1-4\alpha_{0}}\, ,
\end{equation}
should be independent of \Ra, which implies that $\alpha_{0}=1/4$, as noted by \cite[][Table 2]{goluskin2012convection} and
consistent with the scalings proposed by \citet[][see regimes
III$_{\infty}$ and IV$_{u}$]{Wang2020}. Alternatively, if one
uses an argument based on balancing a characteristic free-fall
velocity $\sqrt{\Pran \Ra\langle T\rangle}$ with the velocity
scale $1/\varepsilon_{\scriptscriptstyle T}$ implied by diffusion at the wall
\citep{spiegel1963generalization}, then to leading order
\begin{equation}
\varepsilon_{\scriptscriptstyle T}\sqrt{\Pran\Ra\langle T\rangle}\sim \sigma_{0}^{\frac32}Pr^{\frac12}\Ra^{1-3\alpha_{0}},
\end{equation}
is independent of $\Ra$, implying that $\alpha_{0}=1/3$. In either case
($\alpha_{0}=1/4$ or $\alpha_{0}=1/3$), the resulting scaling corresponds to
the first term in the asymptotic expansion of $\varepsilon_{\scriptscriptstyle T}$ and
does not provide any information about the
correction due to $\delta_{\scriptscriptstyle T}$, which is crucial to determine the asymptotic behaviour of $\langle wT\rangle$.

The simplest argument relating to $\delta_{\scriptscriptstyle T}$, although not necessarily the most faithful,
comes from assuming that in
some vicinity of the lower boundary there is a balance between
heating and diffusion because the flow is stably stratified. In
terms of the dimensionless variables used here, heating over
$\delta_{\scriptscriptstyle T}$ is proportional to $\delta_{\scriptscriptstyle T}$ and diffusion is equal to
$\langle T\rangle/\delta_{\scriptscriptstyle T}$, which implies that
$\delta_{\scriptscriptstyle T}^{2} \sim \langle T\rangle$. This requires $\alpha_{1}=\alpha_{0}/2$, leading to
$\alpha_{1}=1/8$ or $\alpha_{1}=1/6$ for scaling of $\varepsilon_{\scriptscriptstyle T}$ based
on \cite{malkus1954heat} or \cite{spiegel1963generalization},
respectively, and therefore to
$\langle wT\rangle = 1/2 - \sigma_{1}\Ra^{-\frac18}/\sigma_{0}$ or
$\langle wT\rangle = 1/2 - \sigma_{1}\Ra^{-\frac16}/\sigma_{0}$. Assuming
that $\max(\overline{T})$ scales in the same way as
$\langle T\rangle$, meaning that the average temperature is
approximately uniform away from boundaries, the possibility
that $\delta_{\scriptscriptstyle T}^{2} \sim \langle T\rangle$ (so
$\alpha_{1}=\alpha_{0}/2$) is in reasonably good agreement with data from experiments
and simulations \citep[][table 3.2]{goluskin2016internally}.

An alternative argument might consider a Richardson number $Ri$ at
the lower boundary layer to quantify the destablising effects of
turbulence relative to the stabilising effects of the density
stratification. In terms of dimensionless quantities, the density
stratification is $\langle T\rangle /\delta_{\scriptscriptstyle T}$ and we assume that the destabilising
shear across the lower boundary scales according to
$\sqrt{\langle T\rangle} / \delta_{\scriptscriptstyle T}$. Together, these scales imply
that $Ri\sim \delta_{\scriptscriptstyle T}$. This is significant because, if the flow
tends towards a state of marginal stability, then $Ri=1/4$
according to the Miles--Howard criterion for steady, laminar,
parallel and inviscid shear flow \citep{MilJjfm1961a,
  HowLjfm1961a}. We would therefore conclude that either
$\langle wT\rangle = 1/2 - \sigma_{1}\Ra^{-\frac14}/\sigma_{0}$ or
$\langle wT\rangle = 1/2 - \sigma_{1}\Ra^{-\frac13}/\sigma_{0}$,
corresponding to \cite{malkus1954heat} or
\cite{spiegel1963generalization} respectively. The latter scaling
would be consistent with the conjectured bound for insulating
lower boundary conditions, but, unlike the scaling argument outlined in the previous paragraph, is far from the wide range of scaling possibilities that have been inferred from experiments and simulations \citep{goluskin2016internally}. Indeed, available data is too scattered to provide conclusive information about the asymptotic behaviour of $\langle wT\rangle$, highlighting the need for further experiments and simulations in addition to the rigorous bounds pursued here.

\section{Formulation of rigorous bounds}
\label{sec:fomulating_bound}

We now turn our attention to bounding $\langle wT \rangle$ rigorously from above. In \S\S\ref{sec:AFM}--\ref{sec:U-explicit} we ignore the minimum principle for the temperature field and our analysis can be seen as a ``classical'' application of the background method. In \S\ref{sec:bounds-for-positive-temperature}, instead, we improve the analysis by taking the minimum principle into account through a Lagrange multiplier.

\subsection{Bounds via auxiliary functional}
\label{sec:AFM}
A rigorous upper bound on $\langle wT \rangle$ can be derived using the simple observation that the infinite-time average of the time derivative of any bounded function $\mathcal{V}\{\boldsymbol{u},T\}$ along solutions of the governing equations~(\ref{e:governing-equations}\textit{a--c}) vanishes, so
\begin{equation}
    \langle wT \rangle =
    \limsup_{\tau\to\infty} \frac{1}{\tau}\int_0^\tau \left[\fint_\Omega w T   + \frac{{\rm d}}{ {\rm d}t}\mathcal{V}\{\boldsymbol{u},T\} \; {\rm d}\boldsymbol{x} \right] {\rm d}t.
\end{equation}
In particular, if $\mathcal{V}$ can be chosen such that
\begin{equation}\label{e:bounding-condition}
    \mathcal{S}\{\boldsymbol{u},T\} := U - \fint_\Omega w T  + \frac{{\rm d}}{ {\rm d}t}\mathcal{V}\{\boldsymbol{u},T\} \; {\rm d}\boldsymbol{x} \geq 0
\end{equation}
for all time along solutions of~(\ref{e:governing-equations}\textit{a--c}) for some constant $U$, then $\langle wT \rangle \leq U$. The goal, then, is to construct $\mathcal{V}$ that satisfies~\eqref{e:bounding-condition}  with the smallest possible $U$.

While the evolution equations~\eqref{nondim_momentum} and~\eqref{nondim_energy} cannot be solved explicitly for all possible initial conditions, when $\mathcal{V}$ is given they can be used to derive an explicit expression for $\mathcal{S}$ as a function of $\boldsymbol{u}$ and $T$ alone. Then, to ensure that~\eqref{e:bounding-condition} holds along solutions~(\ref{e:governing-equations}\textit{a--c}) pointwise in time, it suffices to enforce that $\mathcal{S}\{\boldsymbol{u},T\}$ be nonnegative when viewed as a functional on the space of time-independent incompressible velocity fields and temperature fields that satisfy the boundary conditions,
\begin{equation}
    \mathcal{H} := \{ (\boldsymbol{u}, T) : \;\text{{(\ref{e:boundary-conditions-all}\textit{a,b})}, horizontal periodicity and } \bnabla \cdot \boldsymbol{u} = 0 \}.
\end{equation}
We therefore search for a function $\mathcal{V}$ and constant $U$ such that $\mathcal{S}\{\boldsymbol{u},T\} \geq 0$ for all velocity and temperature fields in $\mathcal{H}$.
%
This key relaxation makes this approach tractable and, remarkably, it may not be overly conservative. In fact, if the governing equations in~\eqref{e:governing-equations} were well posed (which is not currently known) and solutions eventually remained in a compact subset $\mathcal{K}$ of $\mathcal{H}$, then optimizing $\mathcal{V}$ over a sufficiently general class of functions whilst imposing $\mathcal{S}\{\boldsymbol{u},T\} \geq 0$ for all $\boldsymbol{u}$ and $T$ in $\mathcal{K}$, would yield an upper bound $U$ exactly equal to the largest possible value of $\langle wT \rangle$ \citep{rosa2020optimal}.

Unfortunately, the construction of such an optimal $\mathcal{V}$ is currently beyond the reach of both analytical and computational methods. Nevertheless, progress can be made if we restrict the search to quadratic $\mathcal{V}$ in the form
\begin{equation}
    \mathcal{V}\{\boldsymbol{u},T\} = \fint_{\Omega} \frac{a}{2 \Pran \Ra} |\boldsymbol{u}|^{2} + \frac{b}{2}|T|^{2} - [\psi(z) + z - 1 ]T\, \dVol \, , \label{problemfunctional}
\end{equation}
where the nonnegative scalars $a$ and $b$ and the function $ \psi(z)$ are to be optimized such that~\eqref{e:bounding-condition} holds for the smallest possible $U$. As shown by~\cite{chernyshenko2017relationship}, this choice of $\mathcal{V}$ amounts to using the background method~\citep{constantin1995variational,doering1994variational,doering1996variational}: the profile $\frac{1}{b}[\psi(z)+z-1]$ corresponds to the background temperature field, while the scalars $a$ and $b$ are the so-called balance parameters. Note that the $z-1$ term could be removed by redefining $\psi(z)$, but we isolate it to simplify the analysis in what follows. Note also that, due to the periodicity in the horizontal directions, a ``symmetry reduction" argument following ideas in \citet[][Appendix A]{goluskin2019ks} proves that there is no loss of generality in taking $\psi$ to depend only on the vertical coordinate $z$. Similarly, one can show that the upper bound on $\langle wT \rangle$ cannot be improved by adding to $\mathcal{V}$ a term $\boldsymbol{\phi} \cdot \boldsymbol{u}$, proportional to the velocity field via a (rescaled) incompressible background velocity field $\boldsymbol{\phi}$.

To find an expression for $\mathcal{S}\{\boldsymbol{u},T\}$, we calculate the time derivative of the quadratic~$\mathcal{V}$ in~\eqref{problemfunctional} using the governing equations~{(\ref{e:governing-equations}\textit{b,c})}, substitute the resulting expression into~\eqref{e:bounding-condition}, and integrate various terms by parts using incompressibility and the boundary conditions~{(\ref{e:boundary-conditions-all}\textit{a,b})} to arrive at
\begin{multline}
    \mathcal{S}\{\boldsymbol{u},T\} =
    \fint_{\Omega} \bigg[ \frac{a}{R}|\bnabla \boldsymbol{u}|^{2} + b|\bnabla T|^{2} - (a-\psi')wT+(bz - \psi' - 1)\partial_z T + \psi \bigg]\, \dVol
    \\  + \psi(1)\overline{T}'(1) \, - (\psi(0)-1) \overline{T}'(0)+ 
   U - \frac12\, .
    \label{S_functional_OG}
\end{multline}
The best bound on $\langle wT \rangle$ that can be proved with quadratic $\mathcal{V}$ is 
%
\begin{equation}\label{e:optimization-full}
    \langle wT \rangle 
    \leq \inf_{U,\psi(z), a, b} \left\{U :\; \mathcal{S}\{\boldsymbol{u},T\} \geq 0 \; \forall (\boldsymbol{u},T) \in \mathcal{H} \right\}.
\end{equation}
The right-hand side of \eqref{e:optimization-full} is a linear optimisation problem because the optimisation variables, $U$, $\psi(z)$, $a$ and $b$, enter the constraint $\mathcal{S}\{\boldsymbol{u},T\} \geq 0$ and the cost $U$ linearly.

\subsection{Fourier expansion}\label{sec:fourier-expansion}
The analysis and numerical implementation of the minimisation problem in~\eqref{e:optimization-full} can be considerably simplified by expanding the horizontally-periodic velocity and temperature as Fourier series,
%
\begin{equation}\label{e:Fourier}
    \begin{bmatrix}
    T(x,y,z)\\\boldsymbol{u}(x,y,z)
    \end{bmatrix}
    = \sum_{\boldsymbol{k}} 
    \begin{bmatrix}
    \hat{T}_{\boldsymbol{k}}(z)\\ \hat{\boldsymbol{u}}_{\boldsymbol{k}}(z)
    \end{bmatrix}
    \textrm{e}^{i(k_x x + k_y y)}.
\end{equation}
The sums are over wavevectors $\boldsymbol{k}=(k_x,k_y)$ compatible with the horizontal periods $L_x$ and $L_y$. We denote the magnitude of each wavevector by 
$k = \sqrt{k_{\smash{x}}^2 + k_{\smash{y}}^2}$.
The complex-valued Fourier amplitudes satisfy the complex-conjugate relations $\hat{\boldsymbol{u}}_{-\boldsymbol{k}}=\hat{\boldsymbol{u}}_{\boldsymbol{k}}^*$ and $\hat{T}_{-\boldsymbol{k}} = \hat{T}_{\boldsymbol{k}}^*$, as well as the no-slip and isothermal boundary conditions. Using incompressibility and writing $\hat{w}_{\boldsymbol{k}}$ for the vertical component of $\hat{\boldsymbol{u}}_{\boldsymbol{k}}$, these can be expressed as
\begin{subequations}\label{e:Fourier-bc}
    \begin{gather}
        \label{e:Fourier-bc-wk}
        \hat{w}_{\boldsymbol{k}}(0) = \hat{w}_{\boldsymbol{k}}'(0) = \hat{w}_{\boldsymbol{k}}(1) = \hat{w}_{\boldsymbol{k}}'(1)= 0,\\
        \hat{T}_{\boldsymbol{k}}(0)=\hat{T}_{\boldsymbol{k}}(1) = 0.
        \label{e:Fourier-bc-Tk}
\end{gather}
\end{subequations}

After inserting the Fourier expansions~{(\ref{e:Fourier})} into~\eqref{S_functional_OG} and applying standard estimates based on the incompressibility condition and Young's inequality to replace the horizontal Fourier amplitudes $\hat{u}_{\boldsymbol{k}}$ and $\hat{v}_{\boldsymbol{k}}$ as a function of $\hat{w}_{\boldsymbol{k}}$, the functional $\mathcal{S}\{\boldsymbol{u},T\}$ can be estimated from below as
\begin{equation}\label{e:S-estimate}
    \mathcal{S}\{\boldsymbol{u},T\} \geq \mathcal{S}_{0}\{\hat{T}_0\} + \sum_{\boldsymbol{k}} \mathcal{S}_{\boldsymbol{k}} \{\hat{w}_{\boldsymbol{k}},\hat{T}_{\boldsymbol{k}}\},
\end{equation}
where
\begin{equation}\label{S0}
    \mathcal{S}_{0}\{\hat{T}_0\} = \int_{0}^{1} \left[b \vert \hat{T}_{0}' \vert^{2}+ (bz-\psi')\hat{T}_{0}' + \psi  \right] \textrm{d}z
    +  \psi(1)\hat{T}_{0}'(1)-(\psi(0)-1)\hat{T}_{0}'(0) + U - \frac12
\end{equation}
and 
\begin{multline}\label{Sk}
    \mathcal{S}_{\boldsymbol{k}}\{\hat{w}_{\boldsymbol{k}},\hat{T}_{\boldsymbol{k}}\} = 
    \int_{0}^{1} \left[ 
    \frac{a}{R}\left( 
    \frac{1}{k^2} \abs{\hat{w}_{\boldsymbol{k}}''}^{2}
    + 2\abs{\hat{w}_{\boldsymbol{k}}'}^{2} 
    + k^{2}\abs{\hat{w}_{\boldsymbol{k}}}^{2}  \right) \right.
    \\
    \left.
    + b\vert \hat{T}_{\boldsymbol{k}}' \vert^{2} + bk^{2}\vert \hat{T}_{\boldsymbol{k}} \vert^{2} 
    - (a-\psi')\hat{w}_{\boldsymbol{k}}\hat{T}_{\boldsymbol{k}}^{*} \right] \textrm{d}z. 
\end{multline}
%
Equality holds in~\eqref{e:S-estimate} if $\boldsymbol{u}$ has only one nonzero horizontal component because, in this case, the Fourier-transformed incompressibilty condition yields an exact relation between $\hat{w}_{\boldsymbol{k}}$ and either $\hat{u}_{\boldsymbol{k}}$ or $\hat{v}_{\boldsymbol{k}}$, so Young's inequality is not needed to eliminate the latter.

Velocity and temperature fields with a single nonzero Fourier mode are admissible in the optimization problem~\eqref{e:optimization-full}, so the right-hand side of~\eqref{e:S-estimate} is nonnegative if and only if each term is nonnegative. Moreover, since the real and imaginary parts of the Fourier amplitudes $\hat{w}_{\boldsymbol{k}}$ and $\hat{T}_{\boldsymbol{k}}$ give identical and independent contributions to $\mathcal{S}_{\boldsymbol{k}}$, we may assume them to be real without loss of generality. Thus, we may replace the minimization problem in~\eqref{e:optimization-full} with
\begin{equation}\label{e:optimization-Fourier}
    \begin{aligned}
        \inf_{U,\psi(z),a,b} \quad &U\\
        \text{subject to} \quad
        &\mathcal{S}_0\{\hat{T}_0\} \geq 0 &&\forall \hat{T}_0: \eqref{e:Fourier-bc-Tk},\\
        &\mathcal{S}_{\boldsymbol{k}}\{\hat{w}_{\boldsymbol{k}},\hat{T}_{\boldsymbol{k}}\} \geq 0 &&\forall \hat{w}_{\boldsymbol{k}},\hat{T}_{\boldsymbol{k}}: \text{(\ref{e:Fourier-bc}\textit{a,b})}, \quad \forall \boldsymbol{k}\neq 0.
    \end{aligned}
\end{equation}
%
Any choice of $U$, $\psi(z)$, $a$ and $b$ satisfying the constraints yields a rigorous upper bound on the mean vertical convective heat flux $\langle wT \rangle$. Following established terminology, we refer to the inequalities on $\mathcal{S}_{\boldsymbol{k}}$ as the spectral constraints.

Just like~\eqref{e:optimization-full}, the optimization problem in~\eqref{e:optimization-Fourier} is linear 
and its optimal solution can be approximated using efficient numerical algorithms after discretization. For computational convenience, however, we simplify the spectral constraints by dropping all nonnegative terms that depend explicitly on $k$. Specifically, we replace the spectral constraints with the stronger, but simpler, single condition
\begin{equation}\label{e:simplified-spectral-constraint}
    \Tilde{\mathcal{S}}\{\hat{w}, \hat{T} \} :=\int_0^1 
    \frac{2a}{R}\abs{\hat{w}'}^{2} 
    + b\vert \hat{T}' \vert^{2}
    - (a-\psi')\hat{w}\hat{T}
    \,{\rm d}z \geq 0
     \qquad \forall \hat{T}, \hat{w}:\text{~(\ref{e:Fourier-bc}\textit{a,b})}
\end{equation}
and solve
\begin{equation}\label{e:optimization-Fourier-simplified}
    \begin{aligned}
        \inf_{U,\psi(z),a,b} \quad &U\\
        \text{subject to} \quad
        &\mathcal{S}_0\{\hat{T}_0\} \geq 0 &&\forall \hat{T}_0: \eqref{e:Fourier-bc-Tk},\\
        &\tilde{\mathcal{S}}\{\hat{w},\hat{T}\} \geq 0 &&\forall \hat{w},\hat{T}: \text{(\ref{e:Fourier-bc}\textit{a,b})}
    \end{aligned}
\end{equation}
instead of~\eqref{e:optimization-Fourier}.
This simplification leads to suboptimal bounds on $\langle wT \rangle$ at a fixed \Ra\ but, as discussed in Appendix \ref{sec:comparison_Sk}, still captures the qualitative behaviour of the optimal ones. On the other hand, considering the simplified spectral constraint~\eqref{e:simplified-spectral-constraint} allows for significant computational savings when optimizing bounds numerically, because it removes the need to consider a large set of wavenumbers and it enables implementation using simple piecewise-linear basis functions (cf. appendix~\ref{sec:computational-details}). This allows for discretization of~\eqref{e:optimization-Fourier-simplified} and its generalization~\eqref{e:optimization-Fourier-positive} derived in \S\ref{sec:bounds-for-positive-temperature} below on very fine meshes, which is essential to resolve sharp boundary layers in $\psi$ accurately.


\subsection{Explicit formulation}\label{sec:U-explicit}
To simplify the analysis (but not the numerical implementation) of~\eqref{e:optimization-Fourier} it is convenient to eliminate the explicit appearance of $U$. This can be done upon observing that, given any profile $\psi(z)$ and balance parameters $a$, $b$, the smallest $U$ for which $\mathcal{S}_0\{\hat{T}_0\}$ is nonnegative over admissible $\hat{T}_0$ is
\begin{multline}\label{e:explicit-U-partial}
    U^* = \frac12 + \sup_{\substack{\hat{T}_{\boldsymbol{k}}(0)=0\\\hat{T}_{\boldsymbol{k}}(1) = 0}}
    \bigg\{
    -\int_{0}^{1} \left[b \vert \hat{T}_{0}' \vert^{2}+ (bz-\psi')\hat{T}_{0}' + \psi  \right] \textrm{d}z
    \\[-4ex]
    -\psi(1)\hat{T}_{0}'(1)
    +(\psi(0)-1)\hat{T}_{0}'(0) 
    \bigg\}.
\end{multline}

By modifying any admissible $\hat{T}_0$ in infinitesimally thin layers near $z=0$ and $z=1$, it is possible to show that this value is finite if and only if 
\begin{equation}\label{e:psi-bcs}
\tag{\theequation\textit{a,b}}
    \psi(0)=1 \quad \text{and} \quad \psi(1)= 0.
\end{equation}
Then, the optimal temperature field in~\eqref{e:explicit-U-partial} is given by
\begin{equation}\label{e:optimal-T0-general}
    \hat{T}_0'(z) = \frac{\psi'(z)+1}{2b} - \frac{z}{2} + \frac14,
\end{equation}
and we obtain
\begin{equation}
    \label{eq_for_U}
    U^* = \frac12 + \frac{1}{4b}\left\| bz-\frac{b}{2} - \psi'(z) - 1\right\|_2^2 - \int_{0}^{1} \psi\, \textrm{d}z.
\end{equation}
Thus, we may replace the minimization problem~\eqref{e:optimization-Fourier} with the more explicit version
\begin{equation}\label{e:explicit-U-optimization}
     \begin{aligned}
        \inf_{\psi(z),a,b} \quad &\frac12 + \frac{1}{4b}\left\| bz-\frac{b}{2} - \psi'(z) - 1\right\|_2^2 - \int_{0}^{1}\psi\, \textrm{d}z\\[1ex]
        \text{subject to} \quad
        &\psi(0)=1,\\
        &\psi(1)=0,\\
        &\mathcal{S}_{\boldsymbol{k}}\{\hat{w},\hat{T}\} \geq 0 \qquad\forall \hat{w},\hat{T}: \text{(\ref{e:Fourier-bc}\textit{a,b})}.
    \end{aligned}
\end{equation}
Note that although this formulation is not suitable for numerical implementation because the cost function is not convex with respect to $b$, it is more convenient when attempting to prove an upper bound on $\langle wT \rangle$ analytically.

\subsection{Restriction to nonnegative temperature fields}\label{sec:bounds-for-positive-temperature}

The upper bounding principle derived above can be improved by imposing a minimum principle, which guarantees that temperature fields solving the Boussinesq equations~{(\ref{e:governing-equations}\textit{a--c})} are nonnegative in the domain $\Omega$ at large time. More precisely, Appendix~\ref{sec:proof_T_positive} proves that the fluid's temperature is nonnegative on $\Omega$ at all times if it is so initially, and the negative part of the temperature decays exponentially quickly otherwise.

Since $\langle wT \rangle$ is determined by the long-time behaviour of the velocity and temperature fields, the minimum principle enables us to replace the upper bound~\eqref{e:optimization-full} with
\begin{equation}\label{e:optimization-full-positive}
    \langle wT \rangle 
    \leq \inf_{U,\psi(z), a, b} \left\{U :\; \mathcal{S}\{\boldsymbol{u},T\} \geq 0 \quad \forall (\boldsymbol{u},T) \in \mathcal{H}_+ \right\},
\end{equation}
where the space $\mathcal{H}$ of admissible velocity and temperature fields has been replaced with its subset
\begin{equation}
    \mathcal{H}_+ := \{ (\boldsymbol{u}, T) \in \mathcal{H}: \; T(\boldsymbol{x}) \geq 0 \text{ on } \Omega \}.
\end{equation}
The constraint in the modified optimisation problem~\eqref{e:optimization-full-positive} is clearly weaker than the original one in~\eqref{e:optimization-full}, so imposing the minimum principle allows for a better bound on $\langle wT \rangle$ in principle. This is indeed the case at large Rayleigh numbers, as shall be demonstrated by numerical results in \S\ref{sec:results} and \S\ref{sec:positive_temp}.

In order to impose the inequality constraint $\mathcal{S}\{\boldsymbol{u},T\} \geq 0$ for nonnegative temperatures, but relax it when $T$ is negative on a nonnegligible subset of the domain, we effectively use a Lagrange multiplier. Specifically, we search for a positive bounded linear functional $\mathcal{L}$ such that $\mathcal{S}\{\boldsymbol{u},T\} \geq \mathcal{L}\{T\}$ for all pairs $(\boldsymbol{u},T) \in \mathcal{H}$, which satisfy only the boundary condition and incompressibility. Indeed, the positivity of $\mathcal{L}$ implies $\mathcal{S}\{\boldsymbol{u},T\} \geq \mathcal{L}\{T\} \geq 0$ if $T$ is nonnegative, as desired. When $T$ is negative on a subset of the domain, instead, $\mathcal{L}\{T\}$ need not be positive and the constraint on $\mathcal{S}\{\boldsymbol{u},T\}$ is relaxed. 
%
%
%

Analysis in Appendix~\ref{app:riesz-representation} proves
that there is no loss of generality in taking
\begin{equation}\label{e:positive-linear-functional}
    \mathcal{L}\{T\} = -\fint_\Omega q(z)\partial_z T \dVol,
\end{equation}
where $q:(0,1)\to \mathbb{R}$ is a nondecreasing function, square-integrable but not necessarily continuous, to be optimised alongside the bound $U$, the profile $\psi(z)$, and the balance parameter $a,b$. If $q$ were differentiable, one could integrate by parts to obtain
\begin{equation}
    \mathcal{L}\{T\} = \fint_\Omega q'(z) T\; \dVol
\end{equation}
and identify $q'$ as a standard Lagrange multiplier for the condition $T(\boldsymbol{x}) \geq 0$; working with~\eqref{e:positive-linear-functional} simply removes the differentiability requirement from $q$. Moreover, the value of $\mathcal{L}\{T\}$ in~\eqref{e:positive-linear-functional} does not change when $q$ is shifted by a constant by virtue of the vertical boundary conditions on $T$, so we may normalize $q$ such that
\begin{equation}\label{e:q-condition}
    \int_0^1 q(z) \, {\rm d}z = \psi(1)-\psi(0).
\end{equation}
The upper bound~\eqref{e:optimization-full-positive} can therefore be replaced with
\begin{align}\label{e:optimization-full-positive-with-multiplier}
    \nonumber
    \langle wT \rangle 
    \leq \inf_{U,\psi(z), q(z) a, b} 
    \bigg\{U :\; &q(z) \text{ nondecreasing, \eqref{e:q-condition} and}\\
    &\mathcal{S}\{\boldsymbol{u},T\} + \fint_\Omega q(z) \partial_z T \dVol\geq 0 \quad \forall (\boldsymbol{u},T) \in \mathcal{H},\bigg\}.
\end{align}
Observe that setting $q(z)=-1$ causes the integral $\fint_\Omega q(z) \frac{\partial T}{\partial z} \dVol$ to vanish because $T$ is zero at the top and bottom boundaries, so this choice of $q$ results in the upper bound~\eqref{e:optimization-full} derived without the minimum principle for $T$.

The minimisation problem in~\eqref{e:optimization-full-positive-with-multiplier} can be expanded using a Fourier series exactly as explained in \S\ref{sec:fourier-expansion}. 
After estimating the functionals $\mathcal{S}_{\boldsymbol{k}}$ with $\tilde{\mathcal{S}}$, one concludes that $\langle wT \rangle$ is bounded above by the optimal value of the following linear optimization problem:
\begin{equation}\label{e:optimization-Fourier-positive}
    \begin{aligned}
        \inf_{U,a,b,\psi(z),q(z)} \quad &U\\
        \text{subject to} \quad
        &\mathcal{S}_0\{\hat{T}_0\} + \int_0^1 q(z) \hat{T}_0'(z) {\rm d}z \geq 0\, \qquad\forall \,\hat{T}_0: \hat{T}_0(0)=0=\hat{T}_0(1),\\
        &\Tilde{\mathcal{S}}\{\hat{w},\hat{T}\} \geq 0 \qquad\forall \hat{w},\hat{T}: \text{(\ref{e:Fourier-bc}\textit{a,b})},\\
        &q(z) \text{ nondecreasing}.
    \end{aligned}
\end{equation}
If one does not simplify the spectral constraints, one obtains a very similar problem where the inequality on $\tilde{\mathcal{S}}$ is replaced by the same $\boldsymbol{k}$-dependent inequalities appearing in~\eqref{e:optimization-Fourier}. This problem gives a quantitatively better bound on $\langle wT \rangle$ but has a higher computational complexity than~\eqref{e:optimization-Fourier-positive} and, just as in Appendix~\ref{sec:comparison_Sk}, we do not expect qualitative improvements in the behaviour of the bounds with \Ra.

The constant $U$ can be eliminated from~\eqref{e:optimization-Fourier-positive} by following the same procedure outlined in \S\ref{sec:U-explicit} in order to obtain a minimization problem that is more suitable for analysis, but less convenient for computations.
%
Indeed, the normalization condition~\eqref{e:q-condition} implies that the functional $\mathcal{S}_0\{\hat{T}_0\}$ is minimised when
\begin{equation}\label{e:optimal-T0-positive}
    \hat{T}_0'(z) = \frac{\psi'(z) - q(z)}{2b} - \frac{z}{2} + \frac14
\end{equation}
and the best $U$ for given $a$, $b$, $\psi(z)$ and  $q(z)$ is
\begin{equation}
    \label{eq_for_U_q}
    U^* = \frac12 + \frac{1}{4b}\left\| bz-\frac{b}{2} - \psi'(z) + q(z) \right\|_2^2 - \int_{0}^{1} \psi\, \textrm{d}z\,\,.
\end{equation}
As one would expect, this expression reduces to \eqref{eq_for_U} when $q(z)=-1$.
It is also possible to show that, since we are only interested in nonnegative $\hat{T}_0$, the conditions $\psi(0)=1$ and $\psi(1)=0$ in \S\ref{sec:U-explicit} may be weakened into inequalities $\psi(0)\leq 1$ and $\psi(1)\leq 0$. Thus, when the minimum principle for the temperature is imposed, the optimisation problem~\eqref{e:explicit-U-optimization} relaxes into
\begin{equation}\label{e:explicit-U-optimization-positive}
     \begin{aligned}
        \inf_{\psi(z),q(z),a,b} \quad &\frac12 + \frac{1}{4b}\left\| bz-\frac{b}{2} - \psi'(z) + q(z) \right\|_2^2 - \int_{0}^{1}\psi\, \textrm{d}z\\[1ex]
        \text{subject to} \quad
        &\psi(0) \leq 1,\, \psi(1) \leq 0,\\
        &q(z) \text{ nondecreasing, \eqref{e:q-condition}},\\
        &\Tilde{\mathcal{S}}\{\hat{w},\hat{T}\} \geq 0 \qquad\forall \hat{w},\hat{T}: \text{(\ref{e:Fourier-bc}\textit{a,b})}.
    \end{aligned}
\end{equation}

Finally, observe that setting $\psi(z)=0$, $q(z)=0$, $a=0$ and letting $b\to 0$ in this minimisation problem yields a sequence of feasible solutions with optimal cost approaching $1/2$ from above irrespective of \Ra. Thus, the uniform bound $\langle wT \rangle \leq 1/2$ proved by~\cite{goluskin2012convection} can be recovered within our approach by taking the auxiliary functional in~\eqref{problemfunctional} to be
\begin{equation}
    \mathcal{V}\{\boldsymbol{u},T\} = \fint_{\Omega}  (1-z) T\, \dVol,
\end{equation}
which corresponds to the flow's potential energy measured with respect to the upper boundary. As shown in \S\ref{sec:positive_temp}, however, more general choices can yield better bounds.

\section{Optimal bounds for general temperature fields}
\label{sec:results}

The best upper bounds on $\langle wT \rangle$ implied by problem~\eqref{e:optimization-Fourier} can be approximated numerically at any fixed Rayleigh number either by deriving and solving the corresponding nonlinear Euler--Lagrange equations~\citep{Plasting2003,Wen2013,Wen2015}, or by discretising it into a semidefinite programme (SDP) \citep{fantuzzi2015construction,Fantuzzi2016PRE,Tilgner2017,Tilgner2019,fantuzzi2018bounds}. Here, we choose the latter approach because it preserves the linearity of~\eqref{e:optimization-Fourier};
details of our numerical implementation are summarised in Appendix \ref{sec:computational-details}. Numerically optimal solutions to~\eqref{e:optimization-Fourier} for $10^3 \leq \Ra \leq 10^7$
are presented in~\S\ref{sec:results_no_p},
while suboptimal but analytical bounds are proved in~\S\ref{sec:analytical-bound}.

\subsection{Numerically optimal bounds}
\label{sec:results_no_p}



\begin{figure}
    \centering
    \includegraphics[width=\textwidth]{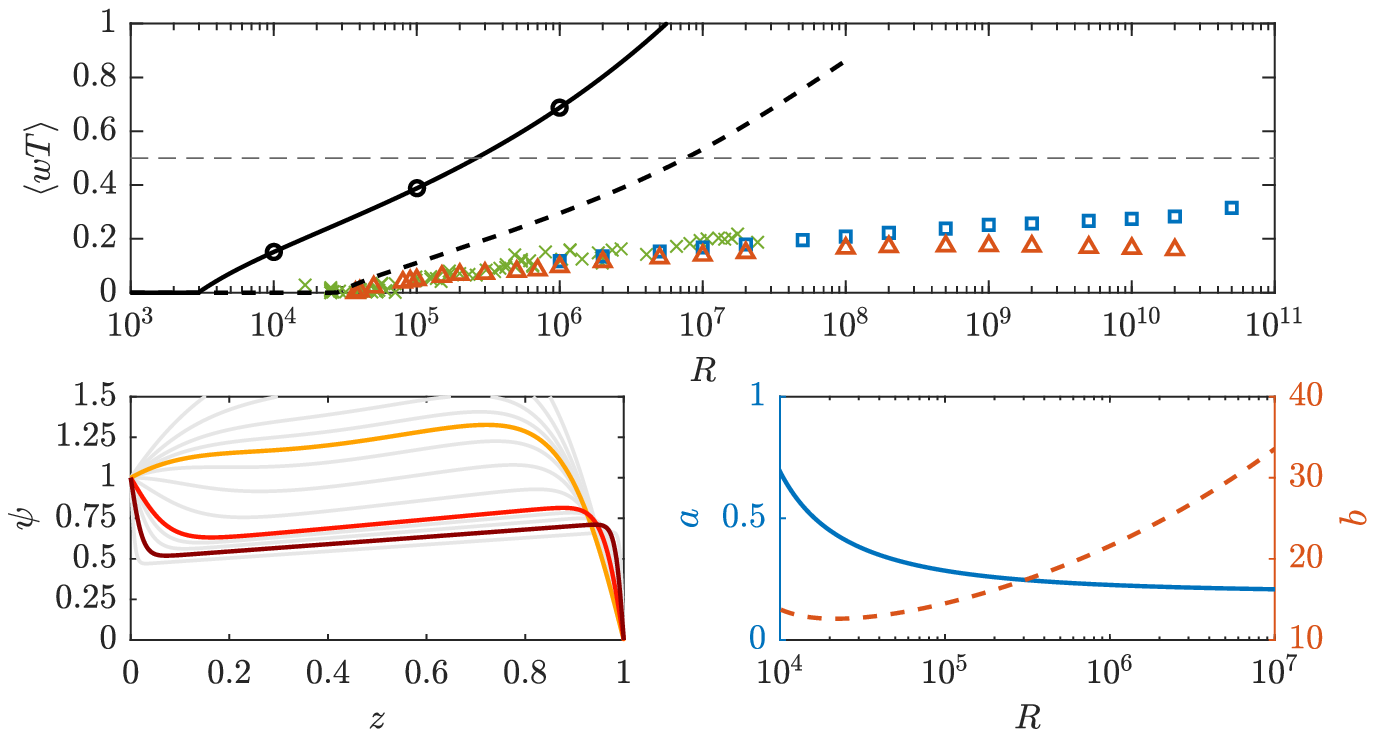}
    \\[-2ex]
    \begin{tikzpicture}[overlay]
        \node at (-5.2,6.8) {\textit{(a)}};
        \node at (-5.2,3.1) {\textit{(b)}};
        \node at (1.2,3.1) {\textit{(c)}};
    \end{tikzpicture}
    \caption{
    \textit{(a)} Optimal bounds on $\langle wT \rangle$ obtained by solving the wavenumber-dependent problem~\eqref{e:optimization-Fourier} with horizontal periods $L_x = L_y = 2$ (\dashedrule) and the simplified problem~\eqref{e:optimization-Fourier-simplified} (\solidrule). Also plotted are experiments by \cite{kulacki1972thermal} (\mycross{matlabgreen}), 2D DNSs by \cite{goluskin2012convection} (\mytriangle{matlabred}) and 3D DNSs by \cite{goluskin2016penetrative} (\mysquare{matlabblue}). Circles mark values of \Ra\ at which optimal profiles $\psi(z)$ are plotted in panel \textit{(b)}. The dashed horizontal line (\dashedrule) is the uniform bound $\langle wT \rangle \leq 1/2$.
    \textit{(b)} Optimal profiles $\psi(z)$ at $\Ra = 10^4$ ({\color{colorbar5}\solidrule}), $10^5$ ({\color{colorbar10}\solidrule}) and $10^6$ ({\color{colorbar16}\solidrule}).
    \textit{(c)} Optimal balance parameters $a$ ({\color{matlabblue}\solidrule}, left axis) and $b$ ({\color{matlabred}\dashedrule}, right axis) as a function of \Ra.
    }
    \label{fig:optimal-bounds-general-T}
\end{figure}

Figure~\ref{fig:optimal-bounds-general-T} compares the numerically optimal upper bounds $U$ on the mean vertical heat transfer $\langle wT \rangle$ to the experimental data by \cite{kulacki1972thermal} and the DNS data by \cite{goluskin2012convection} and \cite{goluskin2016penetrative}. The bounds were calculated by solving the minimization problem~\eqref{e:optimization-Fourier} for a fluid layer with horizontal periods $L_x = L_y = 2$, and with the simplified problem~\eqref{e:optimization-Fourier-simplified}, which is independent of wavevectors and, therefore, of $L_x$ and $L_y$. As expected, the bounds obtained with~\eqref{e:optimization-Fourier} are zero when the Rayleigh number is smaller than the energy stability limit $\Ra_E \approx 29\,723$, which differs slightly from the value $26\,927$ reported by \cite{goluskin2016internally} due to our choice of horizontal periods. They then increase monotonically with $\Ra$, showing the same qualitative behaviour as the bounds computed with the simplified problem~\eqref{e:optimization-Fourier-simplified}, which reach the value of $1/2$ at $\Ra = 259\,032$. Both sets of results exceed the the uniform upper bound $\langle wT \rangle \leq 1/2$ for sufficiently large Rayleigh numbers. The apparent contradiction is due to the fact that the uniform bound relies on the minimum principle for the temperature, which was disregarded in the formulation~\eqref{e:optimization-Fourier}.

Numerically optimal profiles of $\psi(z)$ for the simplified bounding problem~\eqref{e:optimization-Fourier-simplified} at selected Rayleigh numbers and the variation of the optimal balance parameters $a$ and $b$ with \Ra\ are illustrated in Figures~\ref{fig:optimal-bounds-general-T}\textit{(b)} and \ref{fig:optimal-bounds-general-T}\textit{(c)} respectively. As expected from the structure of the indefinite term $(a-\psi')\hat{w}\hat{T}$ of the functional $\Tilde{\mathcal{S}}$, the derivative $\psi'$ in the bulk of the domain approaches the value of $a$ as \Ra\ is raised and leads to the formation of two boundary layers. 
Moreover, the asymmetry of the boundary layers reflects qualitatively the asymmetry of the IH convection problem we are studying, which is characterized by a stable thermal stratification near the bottom boundary ($z=0$) and an unstable one near the top ($z=1$). However, note that while $\psi$ is related to the background temperature field, it is not a physical quantity and need not behave nor scale like the mean temperature in turbulent convection.

\begin{figure}
    \centering
    \includegraphics[width=\linewidth]{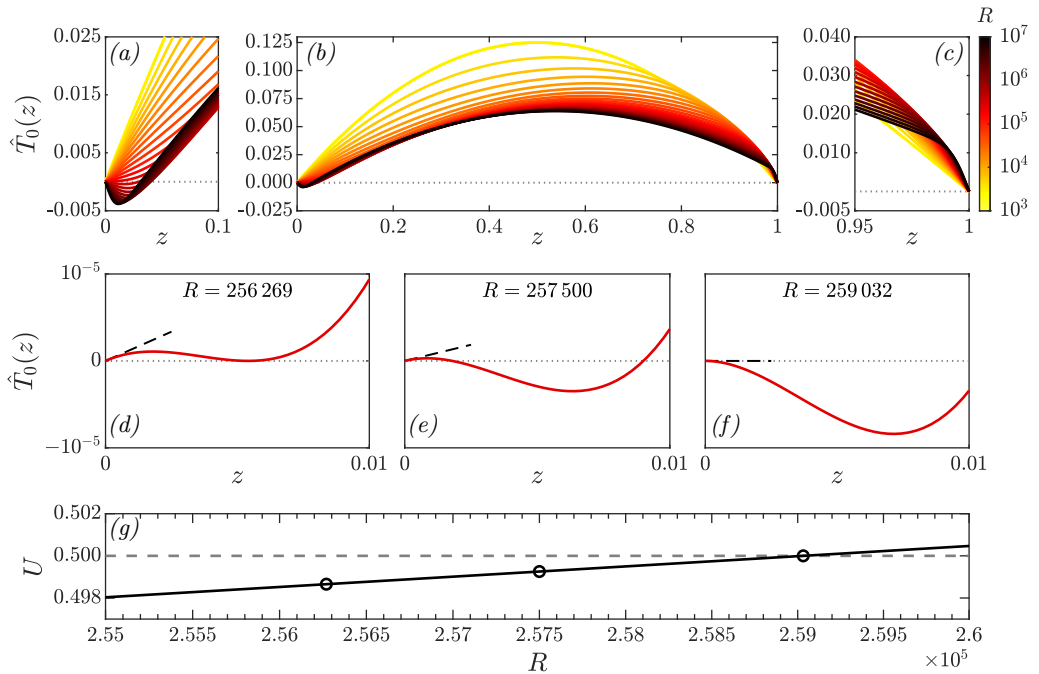}
    \begin{tikzpicture}[overlay]
        \node at (-5.15,8.5) {\textit{(a)}};
        \node at (-2.55,8.5) {\textit{(b)}};
        \node at (5.75,8.5) {\textit{(c)}};
        \node at (-5.15,3.6) {\textit{(d)}};
        \node at (-1.2,3.6) {\textit{(e)}};
        \node at (2.8,3.6) {\textit{(f)}};
        \node at (-5.15,2.25) {\textit{(g)}};
    \end{tikzpicture}
    \vspace*{-1ex}
    \caption{
    \textit{Top}: Critical temperature $\hat{T}_0(z)$ recovered using~\eqref{e:optimal-T0-general}. Colors indicate the Rayleigh number. Panels \textit{(a)} and \textit{(c)} show details of the boundary layers.
    \textit{Middle}: Detailed view of $\hat{T}_0$ for $\Ra = 256\,269$, $257\,500$, and $259\,032$. Dashed lines (\dashedrule) are tangent to $\hat{T}_0$ at $z=0$. In~\textit{(d)}, $\hat{T}_0$ is nonnegative and has minimum of zero inside the layer. In~\textit{(e)}, $\hat{T}_0$ is initially positive but has a negative minimum. In~\textit{(f)}, $\hat{T}_0'(0)=0$ and there is no positive initial layer.
    \textit{Bottom}: Upper bounds $U$ on $\langle wT \rangle$. Circles mark the values of $\Ra$ considered in \textit{(d--f)} and $U=1/2$ at $\Ra = 259\,032$.
    }
    \label{fig:optimal-T0-general-profiles}
\end{figure}

Other insightful observations can be made by considering the critical temperature fields $\hat{T}_{0}$, which minimize the functional $\mathcal{S}_0\{\hat{T}_0\}$ for the optimal choice of $\psi$, $a$, $b$ and $U$. These critical temperatures can be recovered upon integrating~\eqref{e:optimal-T0-general}, and are plotted in Figure~\ref{fig:optimal-T0-general-profiles} for a selection of Rayleigh numbers. As one might expect, when $\Ra$ is sufficiently small such that $U=0$, $\hat{T}_{0}=\frac12 z(1-z)$ coincides with the conductive temperature profile. With the onset of convection and increasing \Ra, boundary layers form at $z=0$ and $z=1$ and the maximum of $|\hat{T}_0|$ decreases. The profiles are also consistent with the uniform rigorous bound $\langle T \rangle \leq 1/12$ \citep{goluskin2016internally}. However, for sufficiently high Rayleigh numbers they are evidently not related to the horizontal and infinite-time averages of the physical temperature field, because they become negative near $z=0$. Interestingly, as shown in Figure~\ref{fig:optimal-T0-general-profiles}\textit{(d)}\&\textit{(e)}, this unphysical behaviour first occurs away from the boundary at $\Ra = 256\,269$, while the numerical upper bound on $\langle wT \rangle$ reaches the value of $1/2$ only at $259\,032$, when $\hat{T}_0'(0)=0$. The latter is not surprising because the identity $\overline{T}'(0)=1/2-\langle wT\rangle$, derived from~\eqref{e:boundary-fluxes-vs-wT} upon recognizing that $\mathcal{F}_0 = \overline{T}'(0)$, implies that an upper bound of 1/2 on $\langle wT \rangle$ is equivalent to a zero lower bound on $\overline{T}'(0)$, which is obtained when $\hat{T}_0'(0)$ vanishes.
%
%
It is therefore clear that the bounding problems~\eqref{e:optimization-Fourier} and~\eqref{e:optimization-Fourier-simplified} fail to improve the uniform bound $\langle wT \rangle \leq 1/2$ proved by \cite{goluskin2012convection} at large \Ra\ due to a violation of the minimum principle for the temperature, which was not taken into account when formulating them.

\subsection{A new Rayleigh-dependent analytical bound}
\label{sec:analytical-bound}

\begin{figure}
    \centering
    \begin{tikzpicture}[every node/.style={scale=0.95}]
    \draw[->,black,thick] (-3.25,0) -- (3,0) node [anchor=west] {$z$};
    \draw[->,black,thick] (-3,-0.25) -- (-3,3) node [anchor=south] {$\psi(z)$};
    \draw[matlabblue,thick] (-3,2.5) -- (-2,1) -- (1.75,2) -- (2.5,0);
    \node[anchor=east] at (-3,2.5) {$1$};
    \node[anchor=north] at (2.5,0) {$1$};
    \node[rotate=15] at (0,1.25) {$\psi'(z)=a$};
    \draw[dashed] (-2,0) node[anchor=north] {$\delta$} -- (-2,1);
    \draw[dashed] (1.75,0) node[anchor=north] {$1-\varepsilon$} -- (1.75,2);
    \draw[dashed] (-3,1) node[anchor=east] {$\sigma$} -- (-2,1);
    \end{tikzpicture}
    \caption{General piecewise-linear $\psi(z)$, parametrized by the boundary layer widths $\delta$ and $\varepsilon$, the bulk slope $a$ and the boundary layer height $\sigma$. In our proof, we set $\varepsilon=\delta$ and $\sigma = \tfrac{1}{2}-a \left(\tfrac{1}{2}-\delta \right)$ to obtain a profile that is anti-symmetric with respect to $z=1/2$.}
    \label{fig:phi_g}
\end{figure}
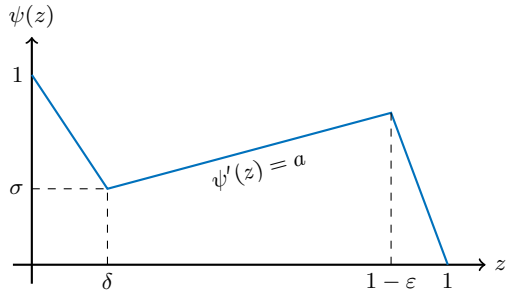

The numerical results in the previous section demonstrate that optimising $\psi$, $a$ and $b$ cannot improve the uniform bound $\langle wT \rangle \leq \frac12$ at arbitrarily large \Ra. Nevertheless, it is possible to derive better bounds analytically over a finite range of Rayleigh numbers by considering piecewise-linear profiles $\psi(z)$ with two boundary layers, such as the one sketched in Figure~\ref{fig:phi_g}. 
Even though the numerically optimal profiles in Figure~\ref{fig:optimal-bounds-general-T} show no symmetry with respect to the vertical midpoint $z=\frac12$, we impose anti-symmetry and take
\begin{equation}
    \psi(z)= 
\begin{dcases}
    1- \left(\tfrac{a+1}{2\delta} -a \right)z ,&  0\leq z \leq \delta\\
    az +\tfrac{1}{2}(1-a),  &  \delta \leq z \leq 1-\delta \\
    \left(\tfrac{a+1}{2\delta} -a \right)(1-z), & 1-\delta \leq z \leq 1,
\end{dcases}
\label{phicomplex}
\end{equation}
where $\delta$ is a boundary layer width to be specified later. This considerably simplifies the algebra, at the cost of a quantitatively (but not qualitatively) worse bound on $\langle wT \rangle$. Our goal is to determine values for $\delta$, $a$ and $b$ such that $\psi$ satisfies the constraints in the reduced optimisation problem~\eqref{e:explicit-U-optimization}, while trying to minimise its objective function.

To show that the functional $\mathcal{S}_{\boldsymbol{k}}$ is nonnegative, observe that the only sign-indefinite term in~\eqref{Sk} is
\begin{equation}\label{e:Sk-lower-bound}
    \int_0^1 (a-\psi')\hat{w}\hat{T} \, \textrm{d}z =
    \frac{a+1}{2\delta}\int_{[0,\delta] \cup [1-\delta,1]} \hat{w}\hat{T} \,\textrm{d}z.
\end{equation}
As detailed in Appendix~\ref{appendix:spectral estimates}, this term can be estimated using the fundamental theorem of calculus, the Cauchy--Schwarz inequality, and the boundary conditions~{(\ref{e:Fourier-bc}\textit{a,b})} 
to conclude that $\mathcal{S}_{\boldsymbol{k}}$ is nonnegative if
\begin{equation}\label{e:delta-inequality}
    \delta \leq 
    \frac{8 \sqrt{2 a b}}{ (a+1) \sqrt{\Ra}}.
\end{equation}

According to the analysis in \S\ref{sec:U-explicit}, any choice of $a$, $b$ and $\delta$ satisfying this inequality produces the upper bound
\begin{align}
    \langle wT \rangle 
    &\leq \frac{1}{2} 
    + \frac{1}{4b}  \left\| bz -\frac{b}{2} - \psi'(z) - 1 \right\|_2^{2} 
    - \int^{1}_{0} \psi(z)\, \textrm{d}z
    \nonumber \\
    &= \frac{b}{48} + \frac{(a+1)^2}{8b\delta} - \frac{(a+1)^2}{4b}.
\end{align}
%
This bound is clearly minimised when $\delta$ is as large as allowed by~\eqref{e:delta-inequality}, leading to
%
\begin{equation}\label{e:implicit-analytical-bound}
    \langle wT \rangle 
    \leq \frac{b}{48} + \frac{ (a+1)^3 \sqrt{\Ra}}{64 b \sqrt{2 a b}} - \frac{(a+1)^2}{4b}.
\end{equation}

The values of $a$ and $b$ minimising the right-hand side of this inequality satisfy
\begin{subequations}
\label{e:optimal-ab}
    \begin{gather}
        -\sqrt{b} + \frac{\sqrt{R}}{64\sqrt{2a^3}}(a+1)(5a -1 )=0,\\
        1 + \frac{12(a+1)^2}{b^2} - \frac{9(a+1)^2\sqrt{R}}{8\sqrt{2ab^5}}= 0,
    \end{gather}
\end{subequations}
and can be computed numerically for fixed $\Ra$ to obtain upper bounds plotted as a blue dot-dashed line in Figure~\ref{fig:U_comp_num_analytic}.

Fully analytical bounds can instead be obtained if we drop the last term from~\eqref{e:implicit-analytical-bound}. In this case, the optimal $a$ and $b$ are found explicitly as
\begin{equation}
a = \frac15 \qquad\text{and}\qquad b = \frac95 \left(\frac{\Ra}{2}\right)^\frac15
\label{a_and_b_scaling}
\end{equation}
such that we obtain
\begin{equation}\label{e:explicit-analytical-bound}
    \langle wT \rangle 
    \leq 2^{-\frac{21}{5}} R^{\frac15}\, .
\end{equation}
This bound, plotted as a solid line in Figure~\ref{fig:U_comp_num_analytic}, is smaller than the uniform bound of 1/2 up to $\Ra = 2^{16}=65\,536$, which is approximately 2.43 times larger than the energy stability threshold. 

\begin{figure}
    \centering
    \includegraphics[scale=1]{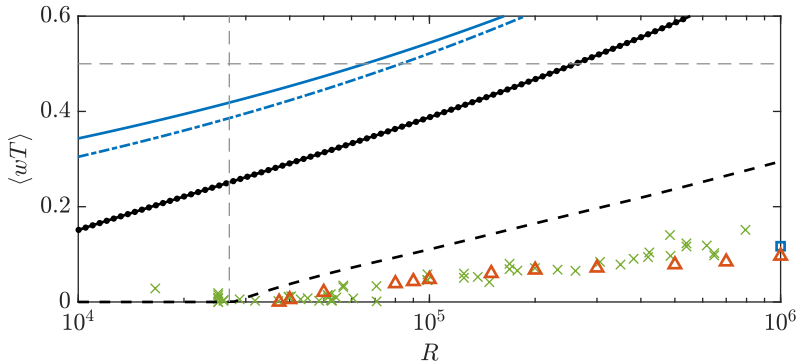}
    \caption{Comparison between the semi-analytical bounds computed with~\eqref{e:implicit-analytical-bound} and~{(\ref{e:optimal-ab}\textit{a,b})} ({\color{matlabblue}\dotdashedrule}), 
    the explicit bound~\eqref{e:explicit-analytical-bound} ({\color{matlabblue}\solidrule}), 
    the numerically optimal bounds obtained in \S\ref{sec:results_no_p} with~\eqref{e:optimization-Fourier} (\dashedrule, $L_x=L_y=2$) and with~\eqref{e:optimization-Fourier-simplified} ($\bullet$), and
    experimental and DNS data (see Figure~\ref{fig:optimal-bounds-general-T} for a key of the symbols).  
    A dashed vertical line ({\color{matlabgray}\dashedrule}) indicates the smallest energy stability threshold, $\Ra \approx 26\,926$, while a dashed horizontal line ({\color{matlabgray}\dashedrule}) indicates the uniform upper bound $\langle wT \rangle \leq \frac12$.
    }
    \label{fig:U_comp_num_analytic}
\end{figure}
%
%
\section{Optimal bounds for nonnegative temperature fields}
\label{sec:positive_temp}

The analysis and computations in \S\ref{sec:results} demonstrate that, if one wants to improve on the uniform bound $\langle wT \rangle \leq 1/2$ at very large \Ra, one must invoke the minimum principle for the temperature explicitly to avoid unphysical critical temperatures $\hat{T}_0$ that are negative in the interior of the layer. As discussed in \S\ref{sec:bounds-for-positive-temperature}, upper bounds on $\langle wT \rangle$ that take this constraint into account can be found by solving~\eqref{e:optimization-Fourier-positive}. 
Numerically optimal solutions to \eqref{e:optimization-Fourier-positive} are presented in \S\ref{sec:num-opt-bnds}. Subsection~\ref{sec:p_discussion}, instead, gives general conditions under which analytical constructions that attempt to make our numerical results rigorous are guaranteed to fail.


\subsection{Numerically optimal bounds} \label{sec:num-opt-bnds}

Problem~\eqref{e:optimization-Fourier-positive} was discretised into an SDP and solved with the high-precision solver \sdpagmp~\citep{yamashita2012latest} for $2.0\times 10^5 \leq \Ra \leq 3.4 \times 10^5$. The MATLAB toolbox \sparsecolo ~\citep{Fujisawa2009} was used to exploit sparsity in the SDPs. At each Rayleigh number, we employed the finite-element discretisation approach described in Appendix \ref{sec:computational-details} on a Chebyshev mesh with at least 6000 piecewise-linear elements, increasing the resolution until the upper bounds changed by less than 1\%. Achieving this at $\Ra=3.4\times 10^5$ required approximately $12\,200$ elements. The numerical challenges associated with setting up the SDPs accurately in double-precision using \sparsecolo\ on even finer meshes prevented us from considering a wider the range of Rayleigh numbers.
%
%

\begin{figure}
    \centering
    \includegraphics[scale=1]{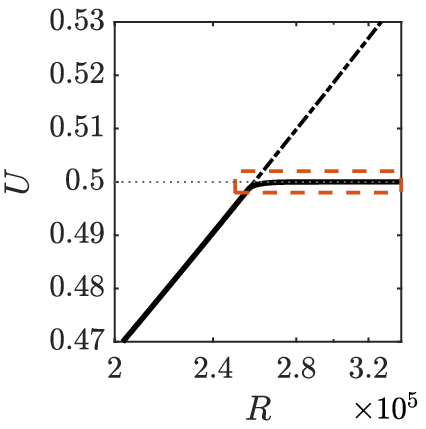}\hspace{10pt}
    \includegraphics[scale=1]{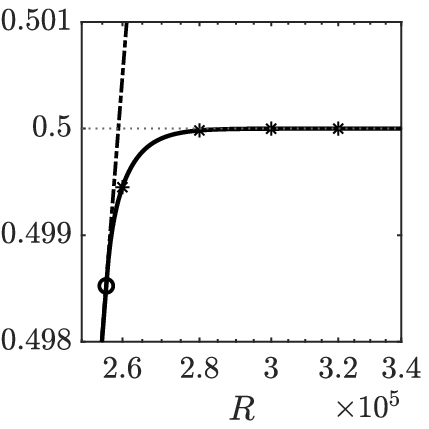}\hspace{5pt}
    \includegraphics[scale=1]{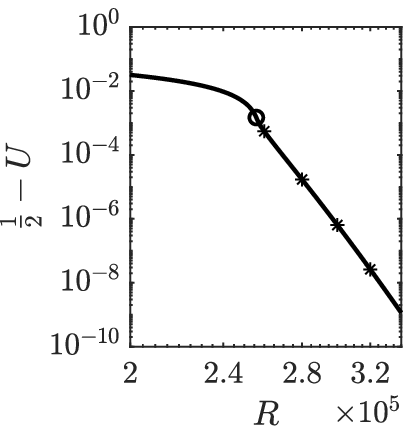}
    \begin{tikzpicture}[overlay]
        \node at (-5.2,4.1) {\textit{(a)}};    
        \node at (1.8,4.1) {\textit{(b)}};
        \node at (6.4,4.1) {\textit{(c)}};
    \end{tikzpicture}
    \label{fig:FE_results_1}
    \caption{
    \textit{(a)}~Optimal bounds $U$ on $\langle wT \rangle$ computed by solving \eqref{e:optimization-Fourier-positive}, which incorporates the minimum principle for temperature (\solidrule). Also plotted (\dotdashedrule) are the bounds computed in \S\ref{sec:results} without the minimum principle ($q(z)=-1$).
    \textit{(b)}~Detail of the region inside the red dashed box ({\color{matlabred}\dashedrule}) in panel~\textit{(a)}.
    \textit{(c)}~Difference between our optimal bounds and the uniform bound $\langle wT \rangle \leq 1/2$, shown in all panels as a dotted line ({\color{matlabgray}\dottedrule}). In \textit{(b)} and \textit{(c)}, a circle at $\Ra\approx 256\,269$ marks the point at which $q(z)$ begins to vary from $-1$, while stars ($\ast$) mark the Rayleigh numbers at which $\psi$ are plotted in Figure \ref{fig:psi_a_and_b}.
    }
    \label{fig:optimal-bounds-T0-positive}
\end{figure}

Figure~\ref{fig:optimal-bounds-T0-positive} compares numerical upper bounds on the heat transfer obtained with (solid line) and without (dot-dashed line) the minimum principle for the temperature, that is, by optimising $q(z)$ or by setting $q(z)=-1$ in \eqref{e:optimization-Fourier-positive}, respectively. The results for the latter case coincide with those described in \S\ref{sec:results_no_p} and shown in Figure~\ref{fig:optimal-bounds-general-T}.

The choice $q(z)=-1$ is optimal for $\Ra < 256\,269$. For higher \Ra, the numerical upper bounds on $\langle wT \rangle$ with optimised $q(z)$ are strictly better than those with $q(z)=-1$ and, crucially, appear to approach the uniform bound $\frac12$ from below as $\Ra$ is raised. Note that although the deviation from $\frac12$ is small at the highest values of \Ra\ that could be handled, it is much larger than the tolerance ($10^{-25}$) used by the multiple-precision SDP solver \sdpagmp, giving us confidence that it is not a numerical artefact. Moreover, the deviation from $\frac12$ of the numerical bounds, shown in Figure \ref{fig:optimal-bounds-T0-positive}\textit{(c)}, appears to decay as a power law, suggesting that the optimal bound available within our bounding framework may have the functional form~\eqref{e:wt-conjecture}. Although the range of Rayleigh numbers spanned by our computations is too small to accurately predict the exponent and the prefactor, it is clear that the decay is much faster than predicted by the heuristic arguments in \S\ref{sec:heuristics}.


\begin{figure}
    \centering
    \hspace*{-0.7cm}
    \includegraphics[scale=1]{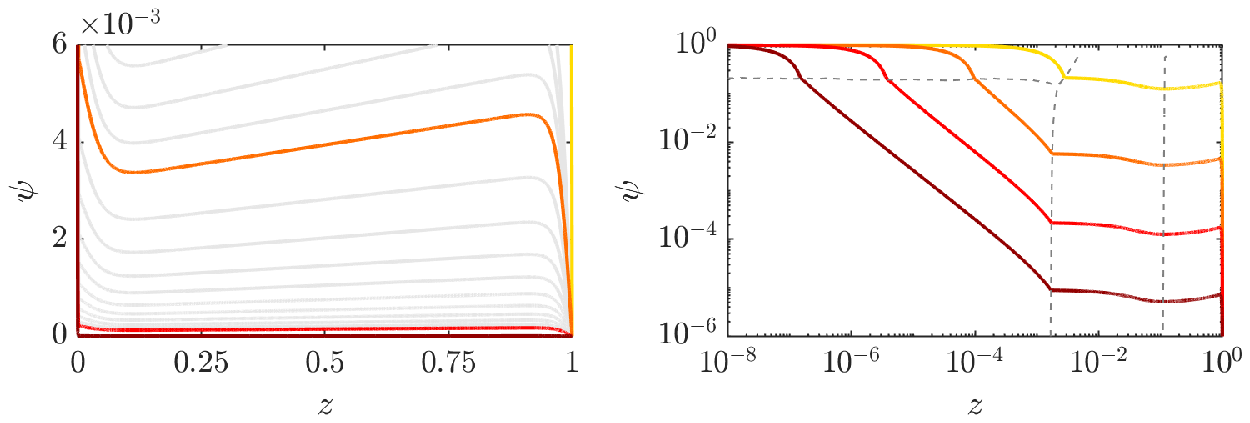}\\
    \includegraphics[scale=1]{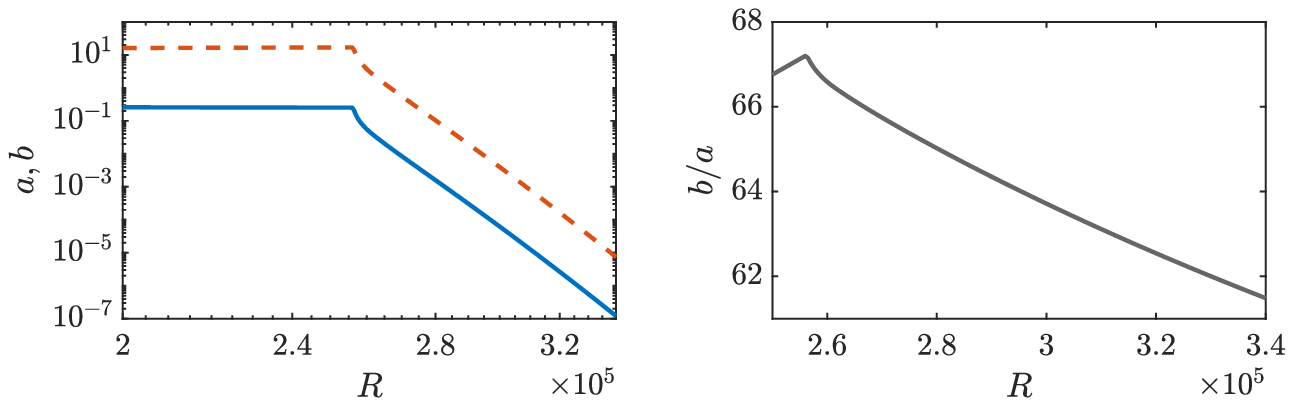}
    \label{fig:a_and_b_new_variations}
    \begin{tikzpicture}[overlay]
        \node at (-7.35,7.65) {\textit{(a)}};
        \node at (-0.7,7.65) {\textit{(b)}};
        \node at (-7.35,3.5) {\textit{(c)}};
        \node at (-0.7,3.5) {\textit{(d)}};
        \node at (-4.8,7.3) {$\delta_0$};
        \draw[black,thick] (-4.65,7.6) -- (-4.65,7.45) ;
        \node at (-2.3,5.25) {$\delta_1$};
        \draw[black,thick] (-2.1,5.45) -- (-2.1,5.3) ;
        \node at (-1.15,5.1) {$\delta_2$};
        \draw[black,thick] (-0.96,5.35) -- (-0.96,5.2) ;
    \end{tikzpicture}
    \caption{
    \textit{(a)} Optimal $\psi(z)$ in the entire domain, plotted for $0\leq\psi\leq0.006$ for visualization purposes. At all $\Ra$ values, $\psi(0)=1$. \textit{(b)} Logarithmic plot of $\psi(z)$, highlighting the behaviour near $z=0$. Coloured lines correspond to the $\Ra$ highlighted in Figure \ref{fig:optimal-bounds-T0-positive}, for $\Ra = 2.6\times 10^5$ ({\color{colorbar2}\solidrule}), $2.8\times 10^5$ ({\color{colorbar6}\solidrule}), $3.0\times 10^5$ ({\color{colorbar11}\solidrule}),  $3.2\times 10^5$ ({\color{colorbar16}\solidrule}). Dashed lines ({\color{grey}\dashedrule}) mark the boundaries of the sublayers $(0,\delta_0)$, $(\delta_0,\delta_1)$ and $(\delta_1,\delta_2)$, and the sublayer edges are labelled explicitly for $\Ra = 3.2\times10^5$.
    \textit{(c)}  Variation  with $\Ra$ of the optimal balance parameters $a$ ({\color{matlabblue}\solidrule}) and $b$ ({\color{matlabred}\dashedrule}) for~\eqref{e:optimization-Fourier-positive}. \textit{(d)} Ratio of balance parameters, $b/a$, when the Lagrange multiplier is active.}
    \label{fig:psi_a_and_b}
\end{figure}

Optimal profiles $\psi(z)$ for selected Rayleigh numbers are shown in Figures \ref{fig:psi_a_and_b}\textit{(a,b)} and differ significantly from the corresponding profiles in Figure \ref{fig:optimal-bounds-general-T}\textit{(b)} obtained when the minimum principle for the temperature is disregarded. When the minimum principle is enforced, $\psi$ appears to approach zero almost everywhere as \Ra\ is raised, but always satisfies $\psi(0)=1$. This leads to the formation of a very thin boundary layer near $z=0$, which at high \Ra\ consists of three distinct sublayers identified by two points, $z=\delta_0$ and $z=\delta_1$, at which $\psi$ is not differentiable. These are indicated by gray dashed lines in Figure \ref{fig:psi_a_and_b}\textit{(b)}. In the first sublayer, from $z=0$ to $z=\delta_1$, $\psi$ is observed to vary linearly. The second sublayer, $\delta_0 < z < \delta_1$, is observed only for $\Ra > 2.6\times10^5$ and we observe that $\psi\sim z^{-1}$ approximately. The third sublayer, from $z=\delta_1$ to the point $z=\delta_2$ at which $\psi$ attains a local minimum, does not have a simple functional form. 
In the bulk, $\psi$ increases approximately linearly with slope very close to $a$, as was the case in \S\ref{sec:results}, and the condition $\psi(1)=0$, which emerges as a result of the optimization and is not imposed \textit{a priori}, is attained through a small boundary layer of width $\varepsilon$ near $z=1$. We choose the boundary of this layer as the point $z=1-\varepsilon$ at which $\psi$ has a local maximum.

Figure~\ref{fig:psi_a_and_b}\textit{(c)} shows the variation of the balance parameters with $\Ra$. Below $256\,269$, both coincide with the values plotted in Figure \ref{fig:optimal-bounds-general-T}\textit{(c)}. At higher $\Ra$, the minimum principle for the temperature becomes active and both balance parameters start to decay rapidly. It would be tempting to conjecture that $a\sim b \sim \Ra^{-p}$ for some power $p$ but, given the small variation of $b/a$ evident in Figure~\ref{fig:psi_a_and_b}\textit{(d)}, we cannot currently exclude that subtly different scaling exponents or higher-order corrections do not play an important role in obtaining an upper bound on $\langle wT \rangle$ that approaches $\frac12$ asymptotically from below.

\begin{figure}
    \centering
    \includegraphics[scale=0.98]{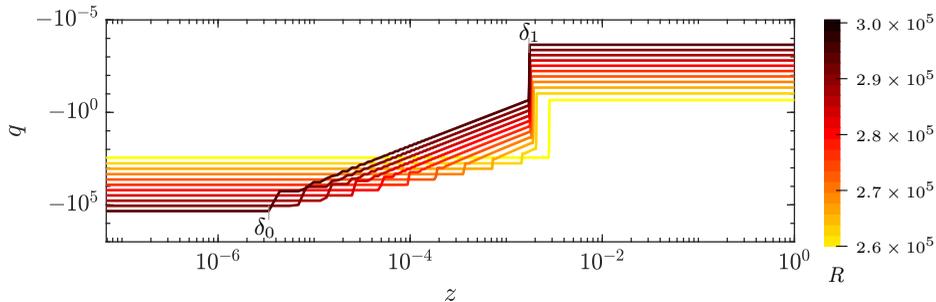}
\begin{tikzpicture}
\draw[draw=white] (0,-0.25) rectangle ++(0.3, 0.7500) node[pos=.5, font=\fontsize{8}{8}] {\Ra};
\draw[draw=colorbar1, fill=colorbar1] (0,0.5000) rectangle ++(0.2,0.1420);
\draw[draw=colorbar2, fill=colorbar2] (0,0.6420) rectangle ++(0.2,0.1420);
\draw[draw=colorbar3, fill=colorbar3] (0,0.7840) rectangle ++(0.2,0.1420);
\draw[draw=colorbar4, fill=colorbar4] (0,0.9260) rectangle ++(0.2,0.1420);
\draw[draw=colorbar5, fill=colorbar5] (0,1.0680) rectangle ++(0.2,0.1420);
\draw[draw=colorbar6, fill=colorbar6] (0,1.2100) rectangle ++(0.2,0.1420);
\draw[draw=colorbar7, fill=colorbar7] (0,1.3520) rectangle ++(0.2,0.1420);
\draw[draw=colorbar8, fill=colorbar8] (0,1.4940) rectangle ++(0.2,0.1420);
\draw[draw=colorbar9, fill=colorbar9] (0,1.6360) rectangle ++(0.2,0.1420);
\draw[draw=colorbar10, fill=colorbar10] (0,1.7780) rectangle ++(0.2,0.1420);
\draw[draw=colorbar11, fill=colorbar11] (0,1.9200) rectangle ++(0.2,0.1420);
\draw[draw=colorbar12, fill=colorbar12] (0,2.0620) rectangle ++(0.2,0.1420);
\draw[draw=colorbar13, fill=colorbar13] (0,2.2040) rectangle ++(0.2,0.1420);
\draw[draw=colorbar14, fill=colorbar14] (0,2.3460) rectangle ++(0.2,0.1420);
\draw[draw=colorbar15, fill=colorbar15] (0,2.4880) rectangle ++(0.2,0.1420);
\draw[draw=colorbar16, fill=colorbar16] (0,2.6300) rectangle ++(0.2,0.1420);
\draw[draw=colorbar17, fill=colorbar17] (0,2.7720) rectangle ++(0.2,0.1420);
\draw[draw=colorbar18, fill=colorbar18] (0,2.9140) rectangle ++(0.2,0.1420);
\draw[draw=colorbar19, fill=colorbar19] (0,3.0560) rectangle ++(0.2,0.1420);
\draw[draw=colorbar20, fill=colorbar20] (0,3.1980) rectangle ++(0.2,0.1420);
\draw[draw=colorbar21, fill=colorbar21] (0,3.3400) rectangle ++(0.2,0.1420);
\draw[draw=black] (0.15,0.5) -- (0.3,0.5) node[anchor=west, font=\fontsize{6}{6}] {$2.6\times10^5$};
\draw[draw=black] (0.15,1.2355) -- (0.3,1.2355) node[anchor=west, font=\fontsize{6}{6}] {$2.7\times10^5$};
\draw[draw=black] (0.15,1.9710) -- (0.3,1.9710) node[anchor=west, font=\fontsize{6}{6}] {$2.8\times10^5$};
\draw[draw=black] (0.15,2.7065) -- (0.3,2.7065) node[anchor=west, font=\fontsize{6}{6}] {$2.9\times10^5$};
\draw[draw=black] (0.15,3.442) -- (0.3,3.442) node[anchor=west, font=\fontsize{6}{6}] {$3.0\times10^5$};
\end{tikzpicture}
    \begin{tikzpicture}[overlay]
        \node at (-9.2,1) {$\delta_0$};
        \draw[grey] (-9.15,1.1) -- (-9.15,1.3);
        \node at (-5.7,3.57) {$\delta_1$};
        \draw[grey] (-5.71,3.3) -- (-5.71,3.5);
    \end{tikzpicture}
    \caption{
     Variation of $q(z)$ shown on a logarithmic scale to highlight the non-zero region of the Lagrange multiplier $q'(z)$, from $\Ra= 2.6 - 3.0\times 10^{5}$. The colorbar here applies for all $\psi$, $q$ and $\hat{T}_0$ presented in  \S\ref{sec:positive_temp}. The edges of the boundary sublayers $(0,\delta_0)$ and $(\delta_0,\delta_1)$ for the largest $\Ra$ value are explicitly labelled.}
    \label{fig:q'}
\end{figure}
\begin{figure}
    \centering
    \includegraphics[scale=1]{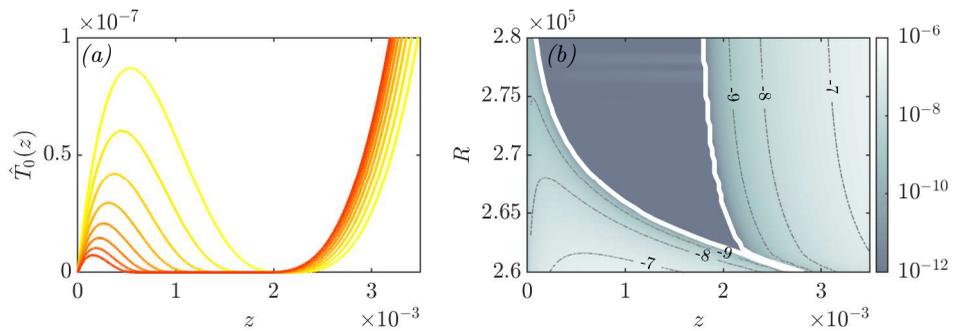}
    \begin{tikzpicture}[overlay]
        \node at (-4.75,4) {\textit{(a)}};
        \node at (1.4,4) {\textit{(b)}};
    \end{tikzpicture}
    \caption{Variation with $\Ra$ of the critical temperature profiles, $\hat{T}_0$, plotted for $0\leq z \leq 3.5\times 10^{-3}$. \textit{(a)} Plots of individual $\hat{T}_0$ from $\Ra= 2.62 - 2.69\times 10^{5}$. \textit{(b)} Contour plot of $\hat{T}_0$ vs $\Ra$. Solid white lines mark the boundary sublayer $\delta_0 \leq z \leq \delta_1$, within which $q'>0$ and $\hat{T}_0 = 0$. Values of $\hat{T}_0$ below $10^{-12}$ are assumed to be numerical zeros.}
    \label{fig:positive_T0_profiles}
\end{figure}

Figure~\ref{fig:q'} shows the structure of $q$, whose (distributional) derivative represents the Lagrange multiplier enforcing the minimum principle. The multiplier, therefore, is active in regions where $q$ is not constant. The choice $q(z)=-1$ is optimal for $\Ra  = 256\,269$. At higher Rayleigh numbers, the multiplier becomes active between $z=\delta_0$ and $z=\delta_1$, which is what causes the second boundary sublayer in $\psi$. In the immediate vicinity of the bottom boundary ($0\leq z \leq \delta_0$), $q$ is constant and it appears that $q(z) - \psi'(z) \approx b/2$ (cf. Figure~\ref{fig:FE_results_chi}\textit{(c)}). Indeed, inspection of the cost function in~\eqref{e:explicit-U-optimization-positive} suggests that $q(z) - \psi'(z) = b(1-z)/2$ should be optimal, but we could not identify the very small $z$-dependent correction in our numerical results. In the second sublayer ($\delta_0\leq z \leq \delta_1$), where the Lagrange multiplier is active, $q(z)\sim -z^{-2}$. Again, this is consistent with the minimization of the cost function in~\eqref{e:explicit-U-optimization-positive}, as one expects $q$ to cancel the very large contribution of $\psi'$ near the bottom boundary.

Further validation of our numerical results comes from inspection of the critical temperatures $\hat{T}_0(z)$, which can be recovered using~\eqref{e:optimal-T0-positive} and are shown in Figure \ref{fig:positive_T0_profiles}\textit{(a,b)} for selected values of $\Ra$. As expected, the critical temperatures are nonnegative for all $z$ and  vanish identically (up to small numerical tolerances) in the region $\delta_0 \leq z \leq \delta_1$, where $q$ is active. We note that, for a given Rayleigh number, this region is strictly larger than the range of $z$ values for which the critical temperatures in Figure \ref{fig:optimal-T0-general-profiles}\textit{(c)} are negative, indicating that the minimum principle alters the problem in a more subtle way than simply saturating the constraint $T\geq 0$. 
\begin{figure}
    \centering
    \includegraphics[scale=1]{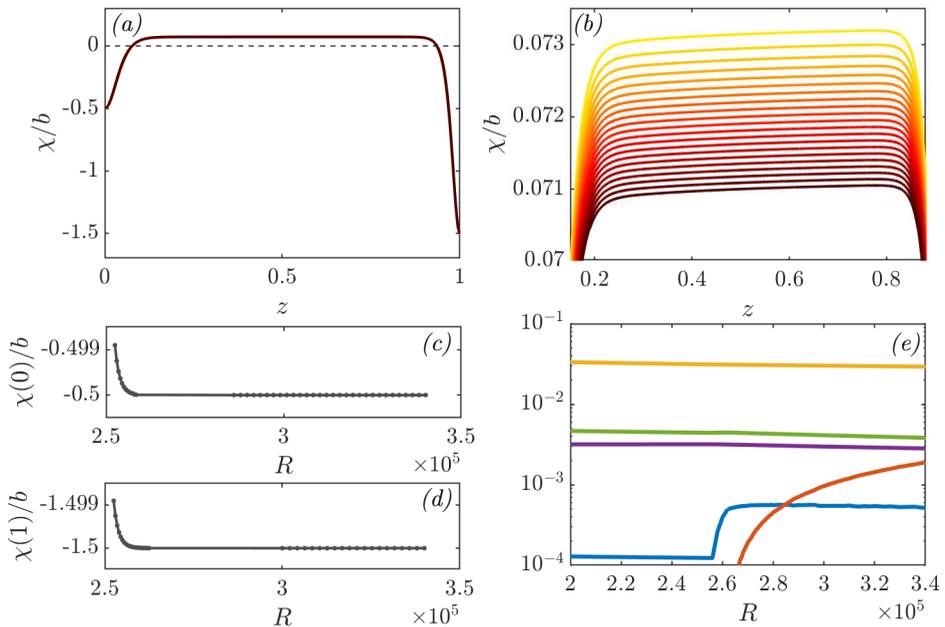}%
    \begin{tikzpicture}[overlay]
        \node at (-11.9,8.1) {\textit{(a)}};
        \node at (-5.8,8.1) {\textit{(b)}};
        \node at (-7.8,3.9) {\textit{(c)}};
        \node at (-7.8,1.85) {\textit{(d)}};
        \node at (-1.6,3.9) {\textit{(e)}};
    \end{tikzpicture}
    \caption{\textit{(a)} Profiles $\chi(z)/b$ for $ 2.6\times 10^5 \leq \Ra \leq 3.0 \times 10^{5}$. \textit{(b)} Detailed view of $\chi(z)/b$ in the bulk. \textit{(c,d)} Variation of $\chi(0)/b$ and $\chi(1)/b$ with $\Ra$.
    \textit{(e)} Contributions to the integral of $\psi/b$ in the regions $(0, \delta_0)$ ({\color{matlabblue}\solidrule}), $(\delta_0, \delta_1)$ ({\color{matlabred}\solidrule}), $(\delta_1, \delta_2)$ ({\color{matlabgreen}\solidrule}), $(\delta_2, 1-\varepsilon)$ ({\color{matlabyellow}\solidrule}) and $(1-\varepsilon, 1)$ ({\color{matlabpurple}\solidrule}), as a function of \Ra. 
    }
    \label{fig:FE_results_chi}
\end{figure}

To analyse the results further we define the diagnostic function 
\begin{equation}
    \chi(z) = \psi'(z) - q(z)\, ,
\end{equation}
and rewrite \eqref{eq_for_U_q} as
\begin{equation}
    \langle wT \rangle \leq 
    \frac{1}{2} +  b\left[ \int^{1}_{0} \frac{1}{4}\left(z-\frac{1}{2} - \frac{\chi(z)}{b} \right)^{2}  - \frac{\psi(z)}{b} \,\textrm{d}z\right]   \, .
    \label{e:objective_chi2}
\end{equation}
Panels \textit{(a)}, \textit{(c)} \& \textit{(d)} in Figure \ref{fig:FE_results_chi} suggest that $\chi(0)/b \to -\tfrac{1}{2}$ and $\chi(1)/b \to -\tfrac{3}{2}$ as $\Ra\rightarrow\infty$. In fact, profiles $\chi(z)/b$ for different Rayleigh numbers collapse almost exactly throughout the layer, but subtle corrections are present; for instance, Figure~\ref{fig:FE_results_chi}\textit{(b)} demonstrates that, in the bulk, the mean value of $\chi(z)/b$ decreases with $\Ra$. It is also evident that $\chi(z)$ is not exactly constant throughout the bulk, but increases by approximately $10^{-4}$. Since $q(z)$ is constant in this region, we conclude that $\psi'(z)$ is not constant, but displays subtle and thus nontrivial variation. 

Figure~\ref{fig:FE_results_chi}\textit{(e)} illustrates the variation with $\Ra$ of contributions to the integral of $\psi(z) / b$ from regions $(0,\delta_0)$, $(\delta_0,\delta_1)$, $(\delta_1,\delta_2)$, $(\delta_2,1-\varepsilon)$ and $(1-\varepsilon,1)$. The largest contribution comes from the bulk ($\delta_2 \leq z \leq 1-\varepsilon$),  but it slowly decreases with $\Ra$. The same is true of the contribution of the top boundary layer ($1-\varepsilon \leq z \leq 1$) and the outermost boundary sublayer near $z=0$ ($\delta_1 \leq z \leq \delta_2$). Only in the first two boundary layers near $z=0$ does the value of the integral increase with $\Ra$, suggesting that the integral of $\psi(z)/b$ near the boundary layer may become the dominant term as $\Ra \to \infty$. 
While the range of Rayleigh numbers covered by our computations is too small to confirm or disprove this conjecture, it is certain that the integral of $\psi(z)/b$ must remain large enough to offset the positive term in~\eqref{e:objective_chi2} in order to obtain a bound on $\langle wT \rangle$ smaller than $1/2$. 


Finally, Figures~\ref{fig:q'} and~\ref{fig:FE_results_chi} suggest that, for sufficiently large $\Ra$, 
\begin{equation}
     q(z) =  
    \begin{dcases}
          \psi'(z) + \tfrac{b}{2}, & 0 \leq z \leq \delta_0,\\
          -q_{0}z^{-2}, & \delta_0 \leq z \leq \delta_1,\\
          \psi'(1) + \tfrac{3b}{2}, & \delta_1 \leq z \leq 1,
    \end{dcases}
    \label{rho_prime_ansatz}
\end{equation}
for some positive constant $q_0$, and that $a,\psi,q_0 \sim b$. 
The next section investigates whether such ansatz can lead to upper bounds on $\langle wT \rangle$ that are strictly smaller than $1/2$ at all Rayleigh numbers.

\subsection{Towards an analytical bound}
\label{sec:p_discussion}

The numerical evidence presented in  \S\ref{sec:num-opt-bnds} suggests that the upper bound on $\langle wT \rangle$ obtained from the optimization problem \eqref{e:explicit-U-optimization-positive} approaches $\tfrac{1}{2}$ asymptotically from below as $R \rightarrow \infty$. To confirm this observation with a proof requires, for every $R \geq 1$, construction of feasible decision variables $a,b,\psi(z),q(z)$ whose corresponding cost is strictly less than $\tfrac{1}{2}$. This section discusses the challenges presented by this goal. Specifically, we show that no construction is possible if one tries to mimic key properties of the numerically optimal decision variables presented in \S\ref{sec:num-opt-bnds} and, at the same time, enforces the spectral constraint using estimates typically used in successful applications of the background method.

To aid the discussion, the following definition introduces  three subsets $\mathcal{A}, \mathcal{B}$ and $\mathcal{C}$ of decision variables that capture some of the properties observed from our numerical study. 

\begin{definition}
Let $0 < \delta < 1-\varepsilon <1$ and $\gamma>0$ and $R \geq 1$. Decision variables $(a,b, \psi(z),q(z))$ are said to belong to: 

\begin{itemize}
  \item[1.] The set $\mathcal{A}\{\delta,\varepsilon\}$ if  the following conditions hold:
\begin{itemize}
\item[(a)] The balance parameters satisfy $0 < a \leq b$;
\item[(b)] $\psi \in C^1[0,1]$ with boundary conditions $0 \leq \psi(0) \leq 1$ and $\psi(1)=0$;
\item[(c)] The derivative of $\psi$ satisfies
$\psi'|_{{[0,\delta]}} \leq 0$, $\psi'|_{{[\delta,1-\varepsilon]}} \geq 0$, and $\|\psi'\|_{L^\infty(1-\varepsilon,1)} \leq 2b$.
\end{itemize}
\item[2.] The set $\mathcal{B}\{\gamma\}$ if both $\psi'(z)$ and $q(z)$ are 
constant on the interval
%
$(\tfrac{1}{2}-\gamma,\tfrac{1}{2}+\gamma).$
\item[3.] The set  $\mathcal{C}\{\delta,\varepsilon,R\}$ if 
\[
\delta^2 \|a-\psi'\|_{L^\infty (0,\delta)} + \varepsilon^2 \|a-\psi'\|_{L^\infty(1-\varepsilon,1)} \leq 8\sqrt{ \frac{2ab}{R} }. 
\]
\end{itemize}  
\end{definition}

The set $\mathcal{A}\{\delta,\varepsilon\}$ contains profiles $\psi$ which possess an initial (and potentially severe) boundary layer in an interval $[0,\delta]$, then increase in a bulk region $[\delta,1-\varepsilon]$, before approaching $\psi(1)=0$ in an upper boundary layer contained in the interval $[1-\varepsilon,1]$ and in which $\psi'$ is controlled by the balance parameter $b$, as seen in previous results. Optimal decision variables $(a,b, \psi(z),q(z))$ obtained in \S\ref{sec:num-opt-bnds}  appear, with compelling evidence, to belong to a set of the form $\mathcal{A}\{\delta,\varepsilon\}$ with the exception of the differentiability condition $\psi \in C^1[0,1]$. Indeed, the optimal profiles $\psi$ appear to be piecewise differentiable, losing differentiability at two points corresponding to the boundaries of non-constant behaviour  of the multiplier $q(z)$ observed in figure \ref{fig:q'}. However, since no higher derivatives of $\psi$ appear in the optimization problem \eqref{e:explicit-U-optimization-positive}, adding the constraint $(a,b,\psi(z),q(z)) \in \mathcal{A}$ to \eqref{e:explicit-U-optimization-positive} will not change its optimal cost.

The set $\mathcal{B}\{\gamma\}$ contains decision variables for which $\psi'$ and $q$ are constant in some interval centred at $\tfrac{1}{2}$. Figures \ref{fig:q'} and \ref{fig:FE_results_chi} reveal that this is not the case for the numerically optimal $\psi$, so the use of $\mathcal{B}\{\gamma\}$ corresponds to a proof which ignores subtle variations from a purely linear profile away from the boundaries. Without further assumptions (say, restriction to fluids with infinite Prandtl number), it is not clear how such variations can be exploited in analytical constructions. 

The set $\mathcal{C}\{\delta,\varepsilon,R\}$ relates to a choice of profiles $\psi$ and balance parameters $a,b$ for which the constraint $\mathcal{S}_{\boldsymbol{k}} \geq 0$ can be proven to hold for a given $R$. In particular, if $a,b,\psi(z)$ satisfy \eqref{e:spectral-constraint-sufficient-condition-general_simp} and it is the case that  $\psi'|_{{[\delta,1-\varepsilon]}}=a$, then the argument in Appendix \ref{appendix:spectral estimates} implies that $\tilde{\mathcal{S}} \geq 0$. Specifically, $\mathcal{C}\{\delta,\varepsilon,R\}$ provides sufficient control of the severity of the boundary layers of $\psi$ for the spectral constraint to be provably satisfied. While crude, estimates of this form in conjunction with constant $\psi'(z)$ in a bulk region $\delta \leq z \leq 1-\varepsilon$ are employed for almost all analytical constructions of background fields. 

We now return to the original question of attempting to upper-bound $\langle wT \rangle$ via an analytical construction of feasible decision variables $(a,b,\psi(z),q(z))$ for \eqref{e:explicit-U-optimization-positive}. It is not unreasonable, based upon the above evidence, to propose $R$-dependent balance parameters $a=a_R, b=b_R$, boundary layer widths $\delta = \delta_R, \varepsilon = \varepsilon_R$ and profiles $\psi_R,q_R$ which satisfy 
\[
(a_R,b_R,\psi_R(z),q_R(z)) \in \mathcal{A}\{ \delta_R,\varepsilon_R\} \cap \mathcal{B}\{\gamma\} \cap \mathcal{C}\{ \delta_R,\varepsilon_R,R\}, \qquad R \geq 1,
\]
for some $\gamma>0$. The following result shows that, for such a construction, there is a hard lower bound on the  optimal cost achievable using \eqref{e:explicit-U-optimization-positive}.

\begin{proposition} \label{prop:counter}
Let $0 < \delta < 1-\varepsilon <1$ and $\gamma>0, R \geq 1$. Suppose that
\begin{equation} \label{eq:abc}
(a,b,\psi(z),q(z)) \in \mathcal{A}\{\delta,\varepsilon\} \cap \mathcal{B}\{\gamma\} \cap \mathcal{C}\{\delta,\varepsilon,R\}.
\end{equation}
Then
\[
\frac{1}{4b}\left\| bz-\frac{b}{2} - \psi'(z) + q(z) \right\|_2^2 - \int_{0}^{1}\psi \, \textrm{d}z \geq \frac{b}{6} \left( \gamma^3 - 24 \cdot \frac{1 + 2\sqrt{2}}{R^{\frac14}} \right). 
\]
\end{proposition}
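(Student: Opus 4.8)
\emph{Overall strategy.} The plan is to write the quantity to be bounded as $T_1-T_2$, where $T_1:=\tfrac{1}{4b}\big\|\,b(z-\tfrac{1}{2})-\psi'(z)+q(z)\,\big\|_2^2$ and $T_2:=\int_0^1\psi\,{\rm d}z$, then to show $T_1\ge\tfrac{1}{6} b\gamma^3$ using only membership in $\mathcal{B}\{\gamma\}$ and $T_2\le 4(1+2\sqrt2)\,b\,R^{-1/4}$ using $\mathcal{A}\{\delta,\varepsilon\}$ and $\mathcal{C}\{\delta,\varepsilon,R\}$; since $4(1+2\sqrt2)bR^{-1/4}=\tfrac{b}{6}\cdot\tfrac{24(1+2\sqrt2)}{R^{1/4}}$, subtracting these gives the claim.

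\emph{Lower bound on $T_1$.} I would drop the (nonnegative) contribution to the $L^2$-integral from $[0,1]\setminus(\tfrac{1}{2}-\gamma,\tfrac{1}{2}+\gamma)$. On the remaining interval, $\mathcal{B}\{\gamma\}$ forces $\psi'-q$ to equal a constant $c$, so there the integrand is the affine function $b(z-\tfrac{1}{2})-c$; substituting $u=z-\tfrac{1}{2}$ and discarding the odd cross term, $\int_{-\gamma}^{\gamma}(bu-c)^2\,{\rm d}u=\tfrac{2}{3} b^2\gamma^3+2\gamma c^2\ge\tfrac{2}{3} b^2\gamma^3$, whence $T_1\ge\tfrac{1}{6} b\gamma^3$. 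This step is routine.

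\emph{Upper bound on $T_2$.} Using $\psi(1)=0$, I would split $\int_0^1\psi=\int_0^\delta+\int_\delta^{1-\varepsilon}+\int_{1-\varepsilon}^1$ and estimate each piece from the information in $\mathcal{A}$. On $[\delta,1-\varepsilon]$ the sign condition $\psi'\ge0$ gives $\psi\le\psi(1-\varepsilon)$, so $\int_\delta^{1-\varepsilon}\psi\le(1-\varepsilon-\delta)\psi(1-\varepsilon)$. On $[0,\delta]$, integration by parts together with $\psi'\le0$ and the pointwise bound $|\psi'|\le\|a-\psi'\|_{L^\infty(0,\delta)}$ (valid since $a\ge0\ge\psi'$) gives $\int_0^\delta\psi\le\delta\psi(\delta)+\tfrac{1}{2}\delta^2\|a-\psi'\|_{L^\infty(0,\delta)}$. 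On $[1-\varepsilon,1]$, writing $\psi(z)=-\int_z^1\psi'$ and using that only the set where $\psi'<0$ contributes positively gives $\int_{1-\varepsilon}^1\psi\le\tfrac{1}{2}\varepsilon^2\|(\psi')^-\|_{L^\infty(1-\varepsilon,1)}\le\tfrac{1}{2}\varepsilon^2\|a-\psi'\|_{L^\infty(1-\varepsilon,1)}$. The two norm terms add to at most half the left side of the defining inequality of $\mathcal{C}\{\delta,\varepsilon,R\}$, hence to at most $4\sqrt{2ab/R}$; and monotonicity of $\psi$ on $[\delta,1-\varepsilon]$ gives $\delta\psi(\delta)+(1-\varepsilon-\delta)\psi(1-\varepsilon)\le(1-\varepsilon)\,\psi(1-\varepsilon)^+\le\psi(1-\varepsilon)^+$, after checking the cases of positive and negative $\psi(1-\varepsilon)$ separately. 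Finally, to bound $\psi(1-\varepsilon)^+$, I would interpolate the a priori estimate $\psi(1-\varepsilon)\le\varepsilon\|\psi'\|_{L^\infty(1-\varepsilon,1)}\le2b\varepsilon$ from $\mathcal{A}$ against $\psi(1-\varepsilon)\le\varepsilon\|(\psi')^-\|_{L^\infty(1-\varepsilon,1)}\le\varepsilon^{-1}\cdot8\sqrt{2ab/R}$ from $\mathcal{C}$, whose geometric mean is $O\!\big((ab^3/R)^{1/4}\big)$. Invoking $0<a\le b$ and $R\ge1$ to turn $\sqrt{ab/R}$ and $(ab^3/R)^{1/4}$ into $bR^{-1/4}$ throughout, and collecting constants, yields $T_2\le 4(1+2\sqrt2)\,b\,R^{-1/4}$.

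\emph{Conclusion and main obstacle.} Combining gives $T_1-T_2\ge\tfrac{1}{6} b\gamma^3-4(1+2\sqrt2)bR^{-1/4}=\tfrac{b}{6}\big(\gamma^3-24(1+2\sqrt2)R^{-1/4}\big)$, which is the assertion. The hard part is the $T_2$ estimate and, within it, the treatment of $\psi(1-\varepsilon)$: this interface value between the bulk, where only monotonicity of $\psi$ is available, and the top boundary layer is not directly constrained by $\mathcal{C}$ (which controls only $\delta^2$- and $\varepsilon^2$-weighted sup-norms of $a-\psi'$, not $\psi$ itself), so one is forced to interpolate between the spectral estimate and the crude bound $\|\psi'\|_{L^\infty(1-\varepsilon,1)}\le2b$. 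One must also keep track of the signs of the three pieces, since the estimate $\psi\le\psi(1-\varepsilon)$ on the bulk is only useful when $\psi(1-\varepsilon)>0$, while the top-layer piece is automatically nonpositive when $\psi'\ge0$ throughout $(1-\varepsilon,1)$.
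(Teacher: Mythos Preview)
Your proposal is correct and follows the same overall architecture as the paper: the lower bound $T_1\ge \tfrac{b}{6}\gamma^3$ from $\mathcal{B}\{\gamma\}$ is identical, and the upper bound on $T_2=\int_0^1\psi$ is obtained by combining the shape constraints in $\mathcal{A}$ with the quantitative control from $\mathcal{C}$, then absorbing $a\le b$ and $R\ge1$ at the end. The technical organisation of the $T_2$ estimate differs, however. The paper first proves a \emph{pointwise} bound $\|\psi\|_{L^\infty(\delta,1)}\le 4bR^{-1/4}$ by applying the mean value theorem on $(1-\varepsilon,1)$, inserting the resulting $\psi'(z_c)$ into the $\mathcal{C}$ inequality, and reading off a discriminant condition from the quadratic in $a/b$; it then handles $\int_0^\delta\psi$ via the crude $\psi\le\psi(0)$ and a second mean-value argument for $\delta\psi(0)$. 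You instead work directly with integrals: integration by parts on each boundary layer produces the terms $\tfrac12\delta^2\|a-\psi'\|_{L^\infty(0,\delta)}$ and $\tfrac12\varepsilon^2\|a-\psi'\|_{L^\infty(1-\varepsilon,1)}$, which sum to exactly half the left side of $\mathcal{C}$, and the interface value $\psi(1-\varepsilon)$ is controlled by the geometric-mean interpolation $\min(2b\varepsilon,\,8\varepsilon^{-1}\sqrt{2ab/R})\le 4\cdot 2^{1/4}(ab^3/R)^{1/4}$. Your route avoids the slightly opaque discriminant step and in fact yields the sharper constant $4(2^{1/4}+\sqrt{2})\approx 10.4$ in place of the paper's $4(1+2\sqrt{2})\approx 15.3$, so the stated inequality follows \emph{a fortiori}; the paper's route is a little shorter because the pointwise $L^\infty$ bound on $\psi$ handles the bulk and the top layer simultaneously.
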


From the proof in Appendix~\ref{appendix:proof_of_eqs}, the consequence of Proposition \ref{prop:counter} is the following. If one constructs feasible decision variables for the optimization problem \eqref{e:explicit-U-optimization-positive} which satisfy \eqref{eq:abc}, then the best achievable bound $U$ for which  $\langle wT \rangle \leq U$ must satisfy 
\[
U \geq \frac12 + \frac{b}{6} \left( \gamma^3 - 24 \cdot \frac{1+2\sqrt{2}}{R^{\frac14}} \right).
\]
Hence, using such a construction with the assumption that $\psi'$ and $q$ are constant 
in a bulk region $(\frac12-\gamma,\frac12+\gamma)$, it is not possible to prove that $\langle wT \rangle < \frac12$ at arbitrarily high Rayleigh number. 

Consequently, one must ask what conditions should be dropped if a rigorous bound $\langle wT \rangle < \frac12, R \geq 1$, is to be found. The numerical evidence presented in  \S\ref{sec:num-opt-bnds} suggests that the optimal decision variables do belong to $\mathcal{A}$. Consequently, either $\mathcal{B}$ or $\mathcal{C}$ must be dropped. Figure \ref{fig:q'} indicates that optimal multipliers $q(z)$ are constant outside a lower boundary layer. Hence, dropping $\mathcal{B}$ corresponds to choosing a profile with non-constant $\psi'(z)$ in the bulk; the cost function of \eqref{e:explicit-U-optimization-positive} and Figure \ref{fig:FE_results_chi}\textit{(b)} suggests that a quadratic ansatz for $\psi(z)$ may be beneficial. Dropping $\mathcal{C}$ corresponds to requiring more sophisticated analysis of the spectral constraint. Using these insights will be the focus of future research.

\section{Conclusions}
\label{sec:conclusion}

We obtained upper bounds $U$ on the vertical heat transport $\langle wT
\rangle$ in IH convection that improve on the best existing bound $\langle wT \rangle\leq 1/2$ \citep{goluskin2012convection}. 
Crucially, we demonstrated that, within our bounding framework, the existing uniform bound can be improved only up to a finite value of the Rayleigh number \Ra\ unless non-physical negativity of the optimal temperature field is prevented.
Specifically, we construct\red{ed} an analytical proof that $\langle w T \rangle \leq 2^{-\frac{21}{5}} R^{1/5}$, which improves on the uniform bound for $\Ra < 2^{16}=65\,536$ (2.43 times larger than the energy stability threshold). However, our numerical results suggest that the best available bound tends to $1/2$ asymptotically from below as $\Ra\rightarrow\infty$ when a Lagrange multiplier is introduced to enforce non-negativity of the optimal temperature field.


A numerical challenge encountered in this work was in the implementation of the optimisation problem~\eqref{e:optimization-Fourier-positive}. 
The sharp boundary layers in $\psi$ and $q$ near $z=0$ require extremely high resolution and, consequently, a fine mesh. This poses challenges not only in terms of computational cost, but also in terms of numerical accuracy. Using a simplification of the spectral constraint we obtained results for a limited range of $\Ra$, but in order to shed light on the problems highlighted above, possible improvements to the algorithm or problem's formulation remain a point of interest and opportunity for further research.


As explained in \S\ref{sec:p_discussion}, the prospect of obtaining an analytical proof that $U\rightarrow 1/2$ strictly from below as $R\rightarrow \infty$ requires the use of more sophisticated estimates to
satisfy the spectral constraint than those that are typically used. Such improvements are necessary, rather than sufficient, conditions because there is a possibility that $U=1/2$ for values of \Ra\ that lie outside the range that we have studied numerically. In this regard, the numerical results, which correspond to the best available bound for quadratic auxiliary functions, suggest either exponential or extremely strong algebraic scaling of $U$ towards $1/2$ in $R$ that is far from any of the heuristic scaling possibilities discussed in \S\ref{sec:heuristics}. More sophisticated treatments of the spectral constraint might yield a proof that $U\rightarrow 1/2$ strictly from below, but they are unlikely to prove bounds that correspond to heuristic scaling arguments. Either IH convection defies all of the heuristics invoked in \S\ref{sec:heuristics} or the quadratic auxiliary function framework is unable to access a crucial aspect of the system's properties. The latter could be explored further by investigating a larger class of auxiliary functions \citep{goluskin2019ks,Fantuzzi2016siads,chernyshenko2014polynomial}, but at the possible expense of analytical or even numerical tractability.

\vspace{2ex}\noindent
\textbf{Acknowledgements}  We are grateful to D. Goluskin for enlighting us on many aspects of IH convection and for sharing his DNS data. This work also benefited from conversations with C. Nobili, J. Whitehead, C. R. Doering, I. Tobasco and G. O. Hughes. AA acknowledges funding by the EPSRC Centre for Doctoral Training in Fluid Dynamics across Scales (award number EP/L016230/1). GF gratefully acknowledges the support of an Imperial College Research Fellowship and the hospitality of the 2018 GFD program at Woods Hole Oceanographic Institution, where this work was begun. 

\vspace{2ex}\noindent
\textbf{Conflict of interests} The authors report no conflict of interests. 

\appendix
\section{Minimum principle in IH convection}
\label{sec:proof_T_positive}

A minimum principle for IH convection can be proved by adapting arguments for Rayleigh--B\'enard convection \citep[][Lemma 2.1]{foias1987attractors}.  Let
\begin{equation}
    T_{-}(\boldsymbol{x},t) :=  \max\{-T(\boldsymbol{x},t) , 0 \}
\end{equation}
denote the negative part of $T$ and observe that $T_-$ is nonnegative on the fluid's domain $\Omega$.
Multiplying the advection-diffusion equation~\eqref{nondim_energy} by $T_-$ and integrating by parts over the domain using the boundary conditions and incompressibility yields
\begin{equation}
    \frac{1}{2}\frac{\textrm{d}}{\textrm{d}t}\lVert  T_{-} \rVert_{2}^{2}  
    = -\lVert \nabla T_{-} \rVert_{2}^{2} 
    - \int_{\Omega} T_{-}\, \dVol.
\end{equation}
Upon observing that the last integral on the right-hand side is positive and that $T_-$ vanishes at the top and bottom boundaries, so $\|\nabla T_-\|_2^2 \geq \mu \|T_-\|_2^2$ for some constant $\mu>0$ by the Poincar\'e inequality, we can estimate
\begin{equation}
     \frac{1}{2}\frac{\textrm{d}}{\textrm{d}t}\lVert  T_{-} \rVert_{2}^{2}  \leq -\mu \lVert  T_{-} \rVert_{2}^{2}.
\end{equation}
Gronwall's lemma then yields
\begin{equation}
    \lVert  T_{-}(t) \rVert_{2}^{2}  \leq  \lVert  T_{-}(0) \rVert_{2}^{2}\, \textrm{e}^{-\mu t},
\end{equation}
so $T_-$ tends to zero in $L^2(\Omega)$ at least exponentially quickly. This implies that $T(\boldsymbol{x},t)\geq 0$ almost everywhere on the global attractor and that $T(\boldsymbol{x},t)$ is nonnegative at all times if it is so at $t=0$. 


\section{Comparing the full and simplified spectral constraints}
\label{sec:comparison_Sk}

This appendix provides further computational evidence that replacing the spectral constraints $\mathcal{S}_{\boldsymbol{k}}\geq 0$ with the single, stronger constraint $\tilde{\mathcal{S}}\geq 0$ does not affect the qualitative behaviour of the optimal bounds on $\langle wT \rangle$. The simplified optimization problem~\eqref{e:optimization-Fourier-simplified} was solved using the finite-element expansion approach described in appendix~\eqref{sec:computational-details}. For simplicity, instead, the wavenumber-dependent problem~\eqref{e:optimization-Fourier} was solved using the general-purpose toolbox \quinopt~\citep{Fantuzzi2017tac}, which implements Legendre series expansions. The critical wavenumbers were determined with the help of the following result.
\begin{figure}
    \centering
    \hspace*{-0.5cm}
    \includegraphics[scale=0.92]{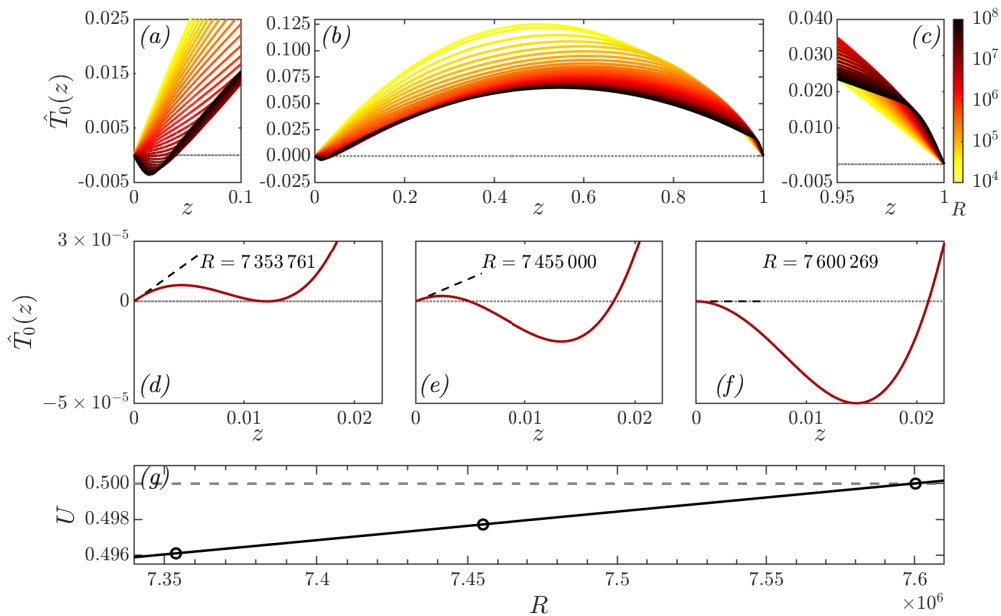}
    \begin{tikzpicture}[overlay]
        \node at (-5,8.6) {\textit{(a)}};
        \node at (-2.6,8.6) {\textit{(b)}};
        \node at (5.2,8.6) {\textit{(c)}};
        \node at (-5,4) {\textit{(d)}};
        \node at (-1.3,4) {\textit{(e)}};
        \node at (2.6,4) {\textit{(f)}};
        \node at (-5,2.8) {\textit{(g)}};
    \end{tikzpicture}
    \vspace*{-1ex}
    \caption{
    \textit{Top}: Critical temperature $\hat{T}_0(z)$ recovered using~\eqref{e:optimal-T0-general} when the spectral constraint is given by $\mathcal{S}_{\boldsymbol{k}}>0$. Colors indicate the Rayleigh number. Panels \textit{(a)} and \textit{(c)} show details of the boundary layers.
    \textit{Middle}: Detailed view of $\hat{T}_0$ for $\Ra = 7\,353\,761$, $7\,455\,000$, and $7\,600\,269$. Dashed lines (\dashedrule) are tangent to $\hat{T}_0$ at $z=0$. In~\textit{(d)}, $\hat{T}_0$ is nonnegative and has minimum of zero inside the layer. In~\textit{(e)}, $\hat{T}_0$ is initially positive but has a negative minimum. In~\textit{(f)}, $\hat{T}_0'(0)=0$ and there is no positive initial layer.
    \textit{Bottom}: Upper bounds $U$ on $\langle wT \rangle$. Circles mark the values of $\Ra$ considered in \textit{(d--f)} and $U=1/2$ at $\Ra = 7\,600\,269$.
    }
    \label{fig:optimal-T0-general-profiles2}
\end{figure}

%
\begin{lemma}
Fix \Ra, $a$, $b$ and $\psi(z)$ satisfying $\psi(0)=1$ and $\psi(1)=0$. The inequality $\mathcal{S}_{\boldsymbol{k}}\{\hat{w}_{\boldsymbol{k}},\hat{T}_{\boldsymbol{k}}\} \geq 0$ holds for all $\hat{w}_{\boldsymbol{k}}$ and $\hat{T}_{\boldsymbol{k}}$ satisfying conditions (\ref{e:Fourier-bc}\textit{a,b}) if
\begin{equation}\label{e:k-bound-condition}
    k^{4} \geq \frac{\Ra}{4ab}\, \left\| a-\psi' \right\|_{\infty}^{2}.
\end{equation}
\end{lemma}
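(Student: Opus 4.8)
The plan is to start from the explicit expression for $\mathcal{S}_{\boldsymbol{k}}$ in~\eqref{Sk} and recall that the only sign-indefinite contribution is the cross term $-\int_0^1 (a-\psi')\hat{w}_{\boldsymbol{k}}\hat{T}_{\boldsymbol{k}}^* \,\mathrm{d}z$, while the remaining terms --- the viscous term weighted by $a/R$ and the diffusive term weighted by $b$ --- are manifestly nonnegative. The strategy is therefore to bound the indefinite term from above by something controlled by the good terms, so that positivity of $\mathcal{S}_{\boldsymbol{k}}$ follows from an elementary algebraic inequality once $k$ is large enough. Since we may take the Fourier amplitudes real (as noted in the excerpt), write $I := \int_0^1 (a-\psi')\hat{w}_{\boldsymbol{k}}\hat{T}_{\boldsymbol{k}}\,\mathrm{d}z$; the aim is $\lvert I\rvert \le \frac{a}{R}\cdot(\text{viscous terms}) + b\cdot(\text{diffusive terms})$ whenever $k^4 \ge \frac{R}{4ab}\|a-\psi'\|_\infty^2$.

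First I would bound $\lvert I\rvert \le \|a-\psi'\|_\infty \int_0^1 \lvert\hat{w}_{\boldsymbol{k}}\rvert\,\lvert\hat{T}_{\boldsymbol{k}}\rvert\,\mathrm{d}z$. Then I apply the Cauchy--Schwarz inequality in $L^2(0,1)$ to split this as $\|a-\psi'\|_\infty \,\|\hat{w}_{\boldsymbol{k}}\|_2\,\|\hat{T}_{\boldsymbol{k}}\|_2$, and then use Young's inequality $xy \le \tfrac{\lambda}{2}x^2 + \tfrac{1}{2\lambda}y^2$ with a parameter $\lambda$ to be chosen, distributing the weight between the $\|\hat{w}_{\boldsymbol{k}}\|_2^2$ and $\|\hat{T}_{\boldsymbol{k}}\|_2^2$ pieces. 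Among the good terms I only need the $k^2\|\hat{w}_{\boldsymbol{k}}\|_2^2$ part of the viscous term (with coefficient $a/R$) and the $k^2\|\hat{T}_{\boldsymbol{k}}\|_2^2$ part of the diffusive term (with coefficient $b$); the remaining good terms are simply discarded as nonnegative. So it suffices to choose $\lambda$ so that
\begin{equation}
\|a-\psi'\|_\infty\,\frac{\lambda}{2} \le \frac{a k^2}{R}
\qquad\text{and}\qquad
\|a-\psi'\|_\infty\,\frac{1}{2\lambda} \le b k^2 .
\end{equation}
Multiplying these two requirements eliminates $\lambda$ and yields $\|a-\psi'\|_\infty^2/4 \le ab k^4/R$, i.e. exactly the condition~\eqref{e:k-bound-condition}; conversely, when~\eqref{e:k-bound-condition} holds one can pick $\lambda = R\|a-\psi'\|_\infty/(2ak^2)$ (equivalently any $\lambda$ in the admissible interval, which is nonempty precisely under~\eqref{e:k-bound-condition}) so that both inequalities above are satisfied, giving $\lvert I\rvert \le \frac{a k^2}{R}\|\hat{w}_{\boldsymbol{k}}\|_2^2 + b k^2\|\hat{T}_{\boldsymbol{k}}\|_2^2$ and hence $\mathcal{S}_{\boldsymbol{k}}\ge 0$.

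I do not expect any serious obstacle here: the argument is a textbook Cauchy--Schwarz--Young chain, and the only mild subtlety is bookkeeping --- making sure $\|a-\psi'\|_\infty$ is finite (which holds since $\psi$ is taken piecewise $C^1$, or at least $\psi' \in L^\infty$, throughout the relevant analysis) and that one keeps track of which pieces of the good quadratic form are being used versus merely dropped. One presentational choice is whether to absorb the full $k$-dependent viscous triple $\frac1{k^2}\|\hat{w}_{\boldsymbol{k}}''\|_2^2 + 2\|\hat{w}_{\boldsymbol{k}}'\|_2^2 + k^2\|\hat{w}_{\boldsymbol{k}}\|_2^2$ or just its last term; using only the last term gives the cleanest derivation and still yields the stated sufficient condition, so that is the route I would take. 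The boundary conditions~(\ref{e:Fourier-bc}\textit{a,b}) play no role beyond ensuring the integrations by parts implicit in~\eqref{Sk} were valid, so they need not be invoked again in this particular lemma.
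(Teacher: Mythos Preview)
Your proposal is correct and follows essentially the same route as the paper: both bound the indefinite term by $\|a-\psi'\|_\infty\,\|\hat{w}_{\boldsymbol{k}}\|_2\,\|\hat{T}_{\boldsymbol{k}}\|_2$ via H\"older and Cauchy--Schwarz, then retain only the $k^2$-weighted terms $\tfrac{ak^2}{\Ra}\|\hat{w}_{\boldsymbol{k}}\|_2^2$ and $bk^2\|\hat{T}_{\boldsymbol{k}}\|_2^2$ from the positive part. The only cosmetic difference is that the paper phrases the final step as nonnegativity of a homogeneous quadratic form in $\|\hat{w}_{\boldsymbol{k}}\|_2$ and $\|\hat{T}_{\boldsymbol{k}}\|_2$ (discriminant condition $\|a-\psi'\|_\infty^2 \le 4abk^4/\Ra$), whereas you unpack the same condition via Young's inequality with a parameter~$\lambda$; the two are equivalent.
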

\begin{proof}
The H\"older and Cauchy--Schwarz inequalities yield
\begin{equation}
    \abs{\int_{0}^{1} (a-\psi')\hat{w}_{\boldsymbol{k}} \hat{T}_{\boldsymbol{k}} \, \textrm{d}z}
     \leq \left\| a-\psi'\right\|_\infty \left\| \hat{w}_{\boldsymbol{k}} \right\|_{2} \big\| \hat{T}_{\boldsymbol{k}} \big\|_2.
\end{equation}
Consequently,
\begin{equation}
    \mathcal{S}_{\boldsymbol{k}} 
    \geq 
    \frac{a k^{2}}{\Ra} \left\| \hat{w}_{\boldsymbol{k}} \right\|_{2}^{2} 
    - \left\| a-\psi'\right\|_\infty \left\| \hat{w}_{\boldsymbol{k}} \right\|_{2} \big\| \hat{T}_{\boldsymbol{k}} \big\|_2 
    + b k^{2} \big\| \hat{T}_{\boldsymbol{k}} \big\|_2^2.
\end{equation}
The right-hand side is a homogeneous quadratic form in $\| \hat{w}_{\boldsymbol{k}} \|_2$ and $\| \hat{T}_{\boldsymbol{k}} \|_2$ and is nonnegative for any choice of $\hat{w}_{\boldsymbol{k}}$ and $\hat{T}_{\boldsymbol{k}}$ if and only if~\eqref{e:k-bound-condition} holds.
\end{proof}

This result guarantees that, when implementing~\eqref{e:optimization-Fourier} numerically, it suffices to consider wavenumbers with
\begin{equation}\label{e:k-critical}
    k < \left(\frac{\Ra}{4ab}\right)^{\frac14} \|a-\psi'\|_{\infty}^{\frac12}.
\end{equation}
While the right-hand side of this inequality is unknown a priori, as it depends on the optimisation variables $a$, $b$ and $\psi$, in practice one can simply solve~\eqref{e:optimization-Fourier}  using all wavevectors with $k$ smaller than an arbitrarily chosen value. Then, one checks whether $\mathcal{S}_{\boldsymbol{k}}$ is indeed nonnegative for all $k$ satisfying~\eqref{e:k-critical}, and repeats the computation with a larger set of wavevectors if these checks fail. 
\begin{figure}
    \centering
    \hspace*{-0.8cm}
    \includegraphics[scale=0.95]{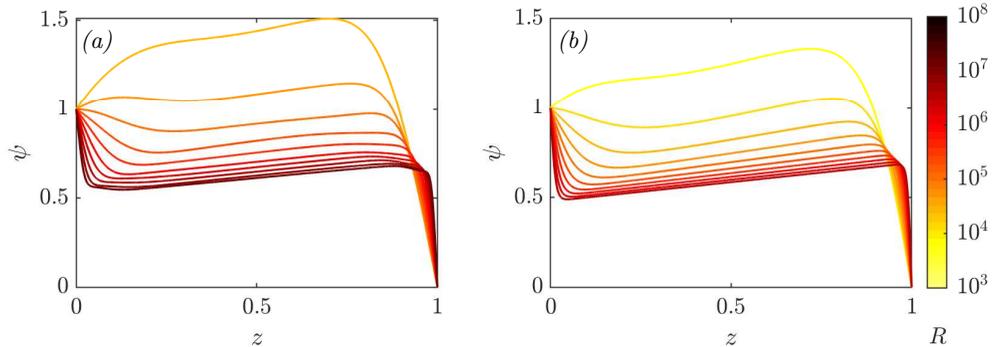}
    \begin{tikzpicture}[overlay]
        \node at (-5.35,4.45) {\textit{(a)}};
        \node at (0.95,4.45) {\textit{(b)}};
    \end{tikzpicture}
    \caption{(\textit{a}) Optimal $\psi(z)$ for selected Rayleigh numbers (see colorbar) obtained by solving optimization problem~\eqref{e:optimization-Fourier}, which imposes the exact spectral constraints $\mathcal{S}_{\boldsymbol{k}}\geq 0$. (\textit{b}) Optimal $\psi(z)$ at the same Rayleigh numbers but obtained by solving optimization problem~\eqref{e:optimization-Fourier-simplified}, which imposes the simplified spectral constraint $\tilde{\mathcal{S}}\geq 0$. }
    \label{fig:comparison_psi_diff_Sk}
\end{figure}
%
%
Upper bounds obtained by solving the full problem~\eqref{e:optimization-Fourier} and the simplified problem~\eqref{e:optimization-Fourier-simplified} are shown in Figure~\ref{fig:optimal-bounds-general-T}. Here we demonstrate the equivalence of the results for both spectral constraints qualitatively. As expected, using the simplified spectral constraint yields worse bounds at a fixed Rayleigh number. While the full spectral constraint yields bounds that are zero up for all $\Ra$ up to the energy stability threshold, which depends on the choice of horizontal periods $L_x$ and $L_y$, the simplified functional $\Tilde{\mathcal{S}}$ is insensitive to these values and gives a conservative estimate for the nonlinear stability threshold. Nevertheless, both sets of result display the same qualitative increase as $\Ra$ is raised. This is further demonstrated in Figures \ref{fig:optimal-T0-general-profiles2}, which can be compared to Figure~\ref{fig:optimal-T0-general-profiles}

In particular, the upper bound reaches $1/2$ exactly when the critical temperature $\hat{T}_0$, which minimizes the functional $\mathcal{S}_0$, has zero gradient at $z=0$, as can be observed in Figures \ref{fig:optimal-T0-general-profiles2}\textit{(f)} \& \textit{(g)}. Shown in panels \textit{(a)-(c)} are the critical temperature fields, $\hat{T}_0$, for $10^{4}\leq \Ra \leq 10^{8}$. In the middle row of the figure, going from left to right, observe first that for $\Ra > 7\,353\,761$ the critical temperature is negative in the domain. Then panel \textit{(e)} shows a $\Ra$ at which  $\hat{T}_0$ is positive in a small region very close to the wall but clearly violates the minimum principle further away. In \textit{(f)}, where $\Ra = 7\,600\,269$, for our choice of $L_x=L_y=2$, the numerically optimal bound passes the uniform bound, at which point we have $\hat{T}_0'(0)=0$. These results for the $\boldsymbol{k}$-dependent spectral constraint qualitatively match the results for the simplified spectral constraint~\eqref{e:simplified-spectral-constraint} presented Figure \ref{fig:optimal-T0-general-profiles}.

As evidenced by Figure~\ref{fig:comparison_psi_diff_Sk}, the optimal $\psi$ obtained with the full and simplified spectral constraints also display similar features. The only notable difference is the non simple behaviour near the boundary layer of the optimal profiles $\psi$ for the full problem~\eqref{e:optimization-Fourier}, which arise after bifurcations in the critical wavenumbers as $\Ra$ is raised. In panel (\textit{b}), instead, $\psi$ exhibits a simple structure in both of the boundary layers.

These observations confirm that strengthening the spectral constraints using the wavevector-independent functional $\tilde{\mathcal{S}}$ only affects our computational results quantitatively, but preserves the overall qualitative behaviour.
%

\section{Justification of \texorpdfstring{\eqref{e:positive-linear-functional}}{(\ref{e:positive-linear-functional})}}
\label{app:riesz-representation}

To justify the choice of bounded linear functional in~\eqref{e:positive-linear-functional}, we start with a technical lemma. In this appendix, $\mathcal{T}$ is the space of square-integrable temperature fields in the Sobolev space $H^1(\Omega)$ that are horizontally periodic and vanish at $z=0$ and $1$. We equip $\mathcal{T}$ with the inner product $(T_1, T_2) = \fint_\Omega \nabla T_1 \cdot \nabla T_2 \,\dVol$.

\begin{lemma}\label{lemma:positive-functionals}
Let $q:(0,1)\to \mathbb{R}$ be a square-integrable function. The bounded linear functional $\mathcal{L}_q:\mathcal{T} \to \mathbb{R}$ given by
%
$\mathcal{L}_q(T) = -\fint_\Omega q(z) \frac{\partial T}{\partial z} \, \dVol$
%
is positive if and only if $q(z)$ is nondecreasing.
\end{lemma}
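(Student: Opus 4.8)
The plan is to reduce the statement to a one‑dimensional problem and then prove the two implications separately. Since $q$ depends only on $z$, Fubini's theorem gives $\mathcal{L}_q(T) = -\int_0^1 q(z)\, g'(z)\,{\rm d}z$, where $g(z)$ denotes the average of $T(\cdot,\cdot,z)$ over a horizontal period. Horizontal averaging is bounded from $H^1(\Omega)$ to $H^1(0,1)$ and preserves the vanishing traces at $z=0,1$, so $g\in H^1_0(0,1)$; it sends nonnegative fields to nonnegative functions; and conversely every $g\in H^1_0(0,1)$ is the horizontal average of the $z$-only field $T(x,y,z)=g(z)\in\mathcal{T}$, which is nonnegative precisely when $g\geq 0$. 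Hence $\mathcal{L}_q$ is positive on $\mathcal{T}$ if and only if $-\int_0^1 q\,g'\,{\rm d}z\geq 0$ for every nonnegative $g\in H^1_0(0,1)$, and it suffices to show this one-dimensional condition holds exactly when $q$ is nondecreasing.

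For the ``if'' implication, assume $q$ is nondecreasing and fix a nonnegative $g\in H^1_0(0,1)$. First I would approximate $g$ in $H^1_0(0,1)$ by nonnegative functions $g_n\in C_c^\infty(0,1)$: multiplying $g$ by a smooth cutoff $\eta_\delta$ supported in $(0,1)$ and equal to $1$ on $[\delta,1-\delta]$ converges to $g$ in $H^1$ as $\delta\to 0$ --- the only nontrivial term, $\|\eta_\delta'\,g\|_2$, is controlled using $g(z)^2\leq z\int_0^z(g')^2$ near $z=0$ (and the symmetric bound near $z=1$) --- and a subsequent mollification preserves nonnegativity and compact support. Each $g_n$ is supported in some $[a,b]\subset(0,1)$, on which $q$ is bounded because it is nondecreasing and real-valued; mollifying $q$ on a slightly larger compact subinterval yields smooth nondecreasing functions $q_\varepsilon\to q$ in $L^2(a,b)$, and integration by parts gives $-\int_a^b q_\varepsilon\,g_n'\,{\rm d}z=\int_a^b q_\varepsilon'\,g_n\,{\rm d}z\geq 0$ since the boundary terms vanish and $q_\varepsilon',g_n\geq 0$. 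Letting $\varepsilon\to 0$ (using $g_n'\in L^2$) and then $n\to\infty$ (using $q\in L^2$) yields $-\int_0^1 q\,g'\,{\rm d}z\geq 0$.

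For the ``only if'' implication, assume $-\int_0^1 q\,g'\,{\rm d}z\geq 0$ for all nonnegative $g\in H^1_0(0,1)$. Given $0<z_1<z_2<1$ and small $h>0$, I would test with the trapezoidal function $g_h$ that equals $1$ on $[z_1,z_2]$, rises linearly from $0$ on $[z_1-h,z_1]$, falls linearly to $0$ on $[z_2,z_2+h]$, and vanishes outside $(z_1-h,z_2+h)$; this is a nonnegative element of $H^1_0(0,1)$ with $g_h'=h^{-1}\mathbf{1}_{(z_1-h,z_1)}-h^{-1}\mathbf{1}_{(z_2,z_2+h)}$, so the positivity condition reads
\[
\frac{1}{h}\int_{z_2}^{z_2+h} q(z)\,{\rm d}z \;\geq\; \frac{1}{h}\int_{z_1-h}^{z_1} q(z)\,{\rm d}z.
\]
Letting $h\to 0$ and using that the one-sided averages of $q$ converge to $q$ at every Lebesgue point, I obtain $q(z_2)\geq q(z_1)$ for every pair of Lebesgue points $z_1<z_2$; since almost every point of $(0,1)$ is a Lebesgue point, $q$ coincides almost everywhere with the nondecreasing function $z\mapsto\inf\{q(w):w\geq z,\ w\text{ a Lebesgue point of }q\}$, which proves the claim.

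The two approximation arguments are routine; the main obstacle is the ``if'' direction, because a nondecreasing $L^2$ function may be unbounded near the endpoints of $(0,1)$, so the integration by parts has to be localised to compact subintervals and the reduction to compactly supported, still nonnegative, test functions must be carried out with some care.
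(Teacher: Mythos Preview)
Your proof is correct and follows essentially the same strategy as the paper's: a trapezoidal test function together with Lebesgue's differentiation theorem for the ``only if'' direction, and approximation of $q$ by smooth nondecreasing functions followed by integration by parts for the ``if'' direction. Your argument is in fact more careful than the paper's in the ``if'' direction---the paper simply appeals to ``a standard approximation argument'' to assume $q$ is smooth on $[0,1]$, whereas you explicitly localise to compact subintervals via nonnegative $C_c^\infty$ approximants of $g$, which correctly accommodates the possible unboundedness of a monotone $L^2$ function near the endpoints.
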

\begin{proof}
First, we prove that $q$ is nondecreasing if $\mathcal{L}_q$ is a positive functional. Fix any $z_1,z_2 \in (0,1)$ with $z_1<z_2$ and choose $\varepsilon>0$ small enough that $0<z_1-\varepsilon$ and $z_2+\varepsilon< 1$. Consider a temperature profile $T_\varepsilon(\boldsymbol{x}) = T_\varepsilon(z)$ that varies only in $z$ and satisfies
\begin{equation*}
    \partial_z T_\varepsilon := \begin{cases}
          \varepsilon^{-1} & z_1-\varepsilon \leq z\leq z_1+\varepsilon,\\
          -\varepsilon^{-1} & z_2-\varepsilon \leq z\leq z_2+\varepsilon,\\
          0 & \text{otherwise}.
    \end{cases}
\end{equation*}
Clearly,  $T_\varepsilon \in \mathcal{T}$ and is nonnegative, so the positivity of $\mathcal{L}_q$ yields
\begin{equation}
0 \leq \mathcal{L}_q\{T_\varepsilon\}
= -\fint_\Omega q(z) \partial_z T_\varepsilon \,\dVol 
= \frac1\varepsilon \int_{z_2-\varepsilon}^{z_2+\varepsilon} q(z) {\rm d}z -
  \frac1\varepsilon \int_{z_1-\varepsilon}^{z_1+\varepsilon} q(z) {\rm d}z.
\end{equation}
Letting $\varepsilon\to 0$ using Lebesgue's differentiation theorem and rearranging yields $q(z_1)\leq q(z_2)$, which implies that $q$ is nondecreasing since $z_1$ and $z_2$ are arbitrary.

To prove the reverse statement, suppose that $q$ is nondecreasing but that $\mathcal{L}_q$ is not positive. This means that there exist a constant $c>0$  and a temperature field $T_0 \in \mathcal{T}$, nonnegative on the domain $\Omega$, such that $\mathcal{L}_q(T_0) \leq -c$. By a standard approximation argument, we may also assume that $q(z)$ is smooth on $[0,1]$. Then, integration by parts using the boundary conditions on $T_0$ yields
\begin{equation*}
    \fint_\Omega q'(z) T \, \dVol = \mathcal{L}_q(T_0) \leq -c.
\end{equation*}
This is a contradiction because the left-hand side is a nonnegative quantity, as $q'(z) \geq 0$ ($q$ is nondecreasing) and $T_0(\boldsymbol{x})\geq 0$ on $\Omega$ by assumption.
\end{proof}

Lemma~\ref{lemma:positive-functionals} guarantees that the bounded linear functional in~\eqref{e:positive-linear-functional} is positive, as required, if $q$ is nondecreasing. The next result shows that considering more general types of positive linear functionals is not helpful.

\begin{proposition}
Suppose there exists a positive bounded linear functional $\mathcal{L}:\mathcal{T}\to \mathbb{R}$ such that $\mathcal{S}\{\boldsymbol{u},T\}\geq \mathcal{L}\{T\}$ for all pairs $(\boldsymbol{u},T) \in \mathcal{H}$. Then, there exists a nondecreasing square-integrable function $q:(0,1)\to \mathbb{R}$ such that
\begin{equation}\label{e:reduced-lagrange-mult-condition}
    \mathcal{S}\{\boldsymbol{u},T\} \geq - \fint_\Omega q(z) \partial_z T \, \dVol \qquad \forall (\boldsymbol{u},T) \in \mathcal{H}.
\end{equation}
\end{proposition}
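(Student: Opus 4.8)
The plan is to reduce the arbitrary positive functional $\mathcal{L}$ to one that sees $T$ only through its vertical structure, and then to read off the result from Lemma~\ref{lemma:positive-functionals}. The reduction exploits the horizontal translation invariance of both $\mathcal{S}$ and the admissible set $\mathcal{H}$.

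First I would symmetrise $\mathcal{L}$. For a horizontal shift $\boldsymbol{s}$ on the periodic torus $\mathbb{T}=[0,L_x]\times[0,L_y]$, write $T^{(\boldsymbol{s})}(x,y,z)=T(x+s_x,y+s_y,z)$ and likewise $\boldsymbol{u}^{(\boldsymbol{s})}$. Since horizontal periodicity, incompressibility and the boundary conditions~(\ref{e:boundary-conditions-all}\textit{a,b}) are all preserved, $(\boldsymbol{u}^{(\boldsymbol{s})},T^{(\boldsymbol{s})})\in\mathcal{H}$ whenever $(\boldsymbol{u},T)\in\mathcal{H}$, and $T^{(\boldsymbol{s})}\in\mathcal{T}$ whenever $T\in\mathcal{T}$. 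Moreover the explicit expression~\eqref{S_functional_OG} for $\mathcal{S}$ consists of volume integrals of pointwise combinations of $\boldsymbol{u}$, $T$ and their gradients, plus boundary terms built from the horizontal average $\overline{T}$; all of these are unchanged by a horizontal shift, so $\mathcal{S}\{\boldsymbol{u}^{(\boldsymbol{s})},T^{(\boldsymbol{s})}\}=\mathcal{S}\{\boldsymbol{u},T\}$ for every $\boldsymbol{s}$. I would then set $\mathcal{L}'\{T\}:=\fint_{\mathbb{T}}\mathcal{L}\{T^{(\boldsymbol{s})}\}\,{\rm d}\boldsymbol{s}$, which is linear and bounded (translation is a strongly continuous isometric action on $\mathcal{T}$, making $\boldsymbol{s}\mapsto\mathcal{L}\{T^{(\boldsymbol{s})}\}$ continuous, hence integrable), positive because $T\geq 0$ on $\Omega$ forces $T^{(\boldsymbol{s})}\geq 0$ on $\Omega$ for all $\boldsymbol{s}$, and still dominated by $\mathcal{S}$: indeed $\mathcal{S}\{\boldsymbol{u},T\}=\mathcal{S}\{\boldsymbol{u}^{(\boldsymbol{s})},T^{(\boldsymbol{s})}\}\geq\mathcal{L}\{T^{(\boldsymbol{s})}\}$ for every $\boldsymbol{s}$, and averaging over $\boldsymbol{s}$ gives $\mathcal{S}\{\boldsymbol{u},T\}\geq\mathcal{L}'\{T\}$ for all $(\boldsymbol{u},T)\in\mathcal{H}$.

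Next I would apply the Riesz representation theorem in the Hilbert space $\mathcal{T}$, equipped with $(T_1,T_2)=\fint_\Omega\nabla T_1\cdot\nabla T_2\,\dVol$: there is a unique $g\in\mathcal{T}$ with $\mathcal{L}'\{T\}=(g,T)$ for all $T\in\mathcal{T}$. Because $\mathcal{L}'$ is horizontally translation invariant, a change of variables in the defining integral gives $(g,T^{(-\boldsymbol{s})})=(g^{(\boldsymbol{s})},T)$, so $(g^{(\boldsymbol{s})},T)=\mathcal{L}'\{T^{(-\boldsymbol{s})}\}=\mathcal{L}'\{T\}=(g,T)$ for all $T$ and all $\boldsymbol{s}$; hence $g^{(\boldsymbol{s})}=g$, i.e. $g=g(z)$ depends only on the vertical coordinate. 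Then $\nabla g=g'(z)\uvec{z}$, so $\mathcal{L}'\{T\}=\fint_\Omega g'(z)\,\partial_z T\,\dVol$, and with $q(z):=-g'(z)\in L^2(0,1)$ (as $g\in\mathcal{T}\subset H^1$) we obtain $\mathcal{L}'\{T\}=-\fint_\Omega q(z)\,\partial_z T\,\dVol=\mathcal{L}_q\{T\}$ in the notation of Lemma~\ref{lemma:positive-functionals}.

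Finally, since $\mathcal{L}'=\mathcal{L}_q$ is a positive functional, Lemma~\ref{lemma:positive-functionals} forces $q$ to be nondecreasing, and combining this with $\mathcal{S}\{\boldsymbol{u},T\}\geq\mathcal{L}'\{T\}$ establishes~\eqref{e:reduced-lagrange-mult-condition}. The only genuinely delicate step is the symmetrisation: one must verify that $\mathcal{L}'$ is a well-defined bounded linear functional, which rests on the strong continuity of $\boldsymbol{s}\mapsto T^{(\boldsymbol{s})}$ from $\mathbb{T}$ into $\mathcal{T}$, and one must be careful that~\eqref{S_functional_OG} is genuinely the value of $\mathcal{S}$ on all of $\mathcal{H}$ so that its manifest translation invariance can be invoked; everything after that is routine Hilbert-space manipulation together with the already-established Lemma~\ref{lemma:positive-functionals}.
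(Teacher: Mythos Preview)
Your proof is correct and follows essentially the same approach as the paper: both exploit horizontal translation invariance of $\mathcal{S}$ and $\mathcal{H}$ together with the Riesz representation of $\mathcal{L}$ in $\mathcal{T}$, then invoke Lemma~\ref{lemma:positive-functionals}. The only cosmetic difference is the order of operations: the paper applies Riesz first to obtain a representer $\theta$, then averages $\theta_{-r,-s}$ over horizontal shifts to produce $\overline{\theta}(z)$ and sets $q=-\overline{\theta}'$; you instead average the functional first to obtain a translation-invariant $\mathcal{L}'$, then apply Riesz and use uniqueness to conclude the representer depends only on $z$. The two routes are equivalent (your $g$ coincides with the paper's $\overline{\theta}$), and your flagged delicate point about strong continuity of the translation action is exactly the regularity needed in either ordering.
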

\begin{proof}
Since $\mathcal{T}$ is a Hilbert space and $\mathcal{L}$ is bounded, the Riesz representation theorem guarantees that there exists a fixed temperature field $\theta\in \mathcal{T}$ such that
\begin{equation}
    \mathcal{L}\{T\} = \fint_\Omega \nabla \theta \cdot \nabla T \,\dVol.
\end{equation}
Next, fix any pair $(\boldsymbol{u},T) \in \mathcal{H}$ and observe that, by virtue of the horizontal periodicity, the pair of translated fields $(\boldsymbol{u}_{r,s},T_{r,s}) := (\boldsymbol{u}(x+r,y+s,z),T(x+r,y+s,z))$ also belongs to $\mathcal{H}$. By assumption $\mathcal{S}\{\boldsymbol{u},T\}\geq \mathcal{L}\{T\}$ for all $(\boldsymbol{u},T) \in \mathcal{H}$, so
\begin{equation}\label{e:shifted-inequality-L}
    \mathcal{S}\{\boldsymbol{u}_{r,s},T_{r,s}\}\geq  \fint_\Omega \nabla \theta \cdot \nabla T_{r,s} \,\dVol = \fint_\Omega \nabla \theta_{-r,-s} \cdot \nabla T \,\dVol,
\end{equation}
where the second equality follows from a change of variables. The same change of variables shows that $\mathcal{S}\{\boldsymbol{u}_{r,s},T_{r,s}\} = \mathcal{S}\{\boldsymbol{u},T\}$, so averaging~\eqref{e:shifted-inequality-L} over all horizontal shifts $r$ and $s$ yields
\begin{equation}\label{e:averaged-inequality-L}
    \mathcal{S}\{\boldsymbol{u},T\}\geq  \fint_\Omega \nabla \overline{\theta} \cdot \nabla T \,\dVol = \fint_\Omega \overline{\theta}'(z) \partial_z T \,\dVol.
\end{equation}
The expression on right-hand side of this inequality is a positive linear functional because it is the average of the positive functionals $T \to \mathcal{L}\{T_{r,s}\}$. Since the pair $(\boldsymbol{u},T) \in \mathcal{H}$ is arbitrary, inequality~\eqref{e:reduced-lagrange-mult-condition} follows upon setting $q(z):=-\overline{\theta}'(z)$ and applying Lemma~\ref{lemma:positive-functionals} to conclude that $q$ is nondecreasing and square-integrable.
\end{proof}


\section{Computational approach}\label{sec:computational-details}
The optimisation problems~\eqref{e:optimization-Fourier} and~\eqref{e:optimization-Fourier-positive} can be discretised into SDPs following a general strategy, and then solved using efficient algorithms for convex optimisation. This ``discretise-then-optimise'' approach preserves the linearity of the original infinite-dimensional problems and enables one to readily impose additional constraints, such as the inequalities on $\psi(0)$, $\psi(1)$ and the monotonicity constraint on $q$, that are not easy to enforce following ``optimise-then-discretise'' strategies based on the numerical solution of the Euler--Lagrange equations for~\eqref{e:optimization-Fourier} and~\eqref{e:optimization-Fourier-positive}.

The discretisation process starts by approximating the tunable functions $\psi$, $q$ and the unknown fields $\hat{T}_0$, $\hat{T}_{\boldsymbol{k}}$ and $\hat{w}_{\boldsymbol{k}}$ using a finite set of basis functions $\{\Phi_1(z),\ldots,\Phi_n(z)\}$, e.g.,
\begin{equation}\label{e:psi-expansion}
    \psi(z) = \sum_{i=1}^n A_i \Phi_i(z).
\end{equation}
Here we use a single set of basis functions for simplicity, but different fields could be approximated using different bases to improve accuracy or allow for varying degrees of smoothness. Note that while the functions $\psi$ and $q$ are arbitrary, so we are free to choose such a finite-dimensional representation without much loss of generality, assuming the same for the test functions $\hat{T}_0$, $\hat{T}_{\boldsymbol{k}}$ and $\hat{w}_{\boldsymbol{k}}$ represents a relaxation of the constraints in~\eqref{e:optimization-Fourier} and~\eqref{e:optimization-Fourier-positive}. Strictly speaking, therefore, our numerical results are not rigorous upper bounds on the vertical heat flux, but we expect convergence as $n \to \infty$.

Substituting expansions such as~\eqref{e:psi-expansion} into the inequalities on $\mathcal{S}_0$ and $\Tilde{\mathcal{S}}$ in~\eqref{e:optimization-Fourier} and~\eqref{e:optimization-Fourier-positive} reduces them to quadratic polynomial inequalities, where the independent variables are the (unknown) expansion coefficients of $\hat{T}_0$, $\hat{T}_{\boldsymbol{k}}$ and $\hat{w}_{\boldsymbol{k}}$, and the coefficients depend linearly on the optimisation variables---the scalars $U$, $a$, $b$ and the expansion coefficients of $\psi$ and $q$. These quadratic polynomial inequalities are equivalent to positive semidefinitess constraints on matrices that depend linearly on the polynomial coefficients, and hence on the optimisation variables. Moreover, the inequalities $\psi(0)\leq 1$, $\psi(1)\leq 0$ and the monotonicity constraint on $q(z)$ in~\eqref{e:optimization-Fourier-positive} can be projected onto the expansion basis to obtain a set of linear constraints on the expansions coefficients of $\psi$ and $q$. The discrete problems are therefore SDPs \citep{Boyd2004} and can be solved with a variety of algorithms \citep[see, e.g.,][]{Nemirovski2006}.

To tackle~\eqref{e:optimization-Fourier} and~\eqref{e:optimization-Fourier-positive}, we used a finite-element approximation similar to that considered by \cite{fantuzzi2018bounds}. The reason for this choice is twofold. First, a piecewise-linear finite-element representation for $q$ enables us to impose exactly the monotonicity constraint, which is key to enforcing the minimum principle on the temperature as discussed in \S\ref{sec:bounds-for-positive-temperature}. Second, the optimal $\psi$ and $q$ have steep boundary layers near the bottom boundary that cannot be approximated accurately at a reasonable computational cost using global polynomial expansions (e.g. Legendre series). Finite-element bases, instead, lead to SDPs with so-called \textit{chordal sparsity} \citep{fukuda2001exploiting} that can be solved extremely efficiently. For our particular problem, however, we also found that finite-element bases lead to SDPs with worse numerical conditioning than those obtained with other bases, such as Legendre polynomials. Accurate solution, therefore, required the multiple-precision solver \sdpagmp\ \citep{Fujisawa2008,Waki2012}. Despite this issue, which we do not expect to be generic, the enhanced sparsity of the finite-element approach resulted in significant efficiency gains compared to accurate Legendre series expansions.
\section{Estimates on the spectral constraint}
\label{appendix:spectral estimates}

A $\Ra$ dependent variation of the parameters follows from enforcing the spectral constraint. In the bulk of the domain the constraint can easily be satisfied for any profile $\psi(z)$ such that $\psi'(z)=a$ for $z \in (\delta, 1-\varepsilon)$. With this assumption, the only sign-indefinite term in~\eqref{Sk} is
\begin{equation}
    \int_0^1 (a-\psi')\hat{w}_{\boldsymbol{k}}\hat{T}_{\boldsymbol{k}}\,\textrm{d}z =
    \int_{[0,\delta]\cup [1-\varepsilon]} (a-\psi')\hat{w}_{\boldsymbol{k}}\hat{T}_{\boldsymbol{k}} \,\textrm{d}z.
\end{equation}
Where we take the real part of the product $\hat{w}_{\boldsymbol{k}}\hat{T}_{\boldsymbol{k}}^{*}$ .To estimate the integral over $[0,\delta]$, we apply the fundamental theorem of calculus, the boundary condition $\hat{w}_{\boldsymbol{k}}(0) = 0$ and the Cauchy--Schwarz inequality to bound
\begin{eqnarray}
     w(z) &=& \int^{z}_{0} w'(\xi) \, \textrm{d}\xi \leq \sqrt{z} \lVert w'_{\boldsymbol{k}} \rVert_{2},     \label{estimate_wk_1} \\
     w(z) &\leq& \tfrac{1}{\sqrt{2}} z^{\frac32} \lVert w''_{\boldsymbol{k}} \rVert_{2}.
     \label{estimate_wk_2}
\end{eqnarray}
Identical estimates show that $T(z) \leq \sqrt{z} \lVert T'_{\boldsymbol{k}} \rVert_{2}$.
Using these inequalities, we can estimate in two ways such that 
\begin{align}
    \left| \int^{\delta}_{0} (a-\psi') \hat{w}_{\boldsymbol{k}} \hat{T}_{\boldsymbol{k}}\, \textrm{d}z \right| 
    &\leq \lVert a-\psi' \rVert_{L^{\infty}(0,\delta)} \int^{\delta}_{0}\abs{\hat{w}_{\boldsymbol{k}}} \big\vert\hat{T}_{\boldsymbol{k}}\big\vert\, \textrm{d}z \nonumber \\
    &\leq \delta^{2}  \lVert a-\psi' \rVert_{L^{\infty}(0,\delta)} \left( \frac{\alpha}{2}\lVert \hat{w}'_{\boldsymbol{k}} \rVert_{2} \lVert \hat{T}'_{\boldsymbol{k}} \rVert_{2} + \frac{1-\alpha}{2\sqrt{2}} \lVert \hat{w}''_{\boldsymbol{k}} \rVert_{2} \lVert \hat{T}_{\boldsymbol{k}} \rVert_{2}  \right)\,  .
\label{e:estimate-bottom-bl}
\end{align}
where $\alpha \in [0,1]$ is a weighting parameter to be specified later
Similar analysis near $z=1$ yields
\begin{align}\label{e:estimate-top-bl}
    \left| \int_{1-\varepsilon}^{1} \!(a-\psi') \hat{w}_{\boldsymbol{k}} \hat{T}_{\boldsymbol{k}} \textrm{d}z \right| 
    \leq \varepsilon^{2} \lVert a-\psi' \rVert_{L^\infty(1-\varepsilon,1)} \left(\frac{\beta}{2}\lVert \hat{w}'_{\boldsymbol{k}} \rVert_{2} \lVert \hat{T}'_{\boldsymbol{k}} \rVert_{2} + \frac{1-\beta}{2\sqrt{2}}\lVert \hat{w}''_{\boldsymbol{k}} \rVert_{2} \lVert \hat{T}_{\boldsymbol{k}} \rVert_{2} \right) ,
\end{align}
again $\beta \in [0,1]$ is a weighting parameter. Substituting inequalities~\eqref{e:estimate-bottom-bl} and~\eqref{e:estimate-top-bl} into \eqref{Sk} shows that
\begin{align}
    \mathcal{S}_{\boldsymbol{k}} \geq& 
    \frac{2 a}{R} \lVert \hat{w}'_{\boldsymbol{k}} \rVert_{2}^{2}- \frac{1}{2}\lambda_1(a,\delta,\varepsilon,\alpha,\beta,\psi') \lVert \hat{w}'_{\boldsymbol{k}} \rVert_{2} \lVert \hat{T}'_{\boldsymbol{k}} \rVert_{2} + b \lVert \hat{T}'_{\boldsymbol{k}} \rVert_{2}^{2} \nonumber \\
    &+ \frac{a}{Rk^2} \lVert \hat{w}''_{\boldsymbol{k}} \rVert_{2}^{2}- \frac{1}{2\sqrt{2}}\lambda_2(a,\delta,\varepsilon,\alpha,\beta,\psi') \lVert \hat{w}''_{\boldsymbol{k}} \rVert_{2} \lVert \hat{T}_{\boldsymbol{k}} \rVert_{2} + bk^2 \lVert \hat{T}_{\boldsymbol{k}} \rVert_{2}^{2}
\end{align}
where
\begin{eqnarray}
    \lambda_1(a,\delta,\varepsilon,\alpha,\beta,\psi') &:=& \alpha\delta^2\lVert a-\psi' \rVert_{L^{\infty}(0,\delta)} + \beta\varepsilon^2\lVert a-\psi' \rVert_{L^\infty(1-\varepsilon,1)}\, , \\
    \lambda_2(a,\delta,\varepsilon,\alpha,\beta,\psi') &:=& (1-\alpha)\delta^2\lVert a-\psi' \rVert_{L^{\infty}(0,\delta)} + (1-\beta) \varepsilon^2\lVert a-\psi' \rVert_{L^\infty(1-\varepsilon,1)}.
\end{eqnarray}
The right-hand side of the last inequality is two homogeneous quadratic forms in the variables $\|\hat{w}'_{\boldsymbol{k}}\|_2$, $\|\hat{T}'_{\boldsymbol{k}}\|_2$ and $\|\hat{w}''_{\boldsymbol{k}}\|_2$, $\|\hat{T}_{\boldsymbol{k}}\|_2$ , so it is nonnegative if the discriminant of both forms is nonpositive, i.e.,
\begin{eqnarray}
    \abs{\lambda_1(a,\delta,\varepsilon,\alpha,\beta,\psi')}^2 \leq \frac{32ab}{\Ra}, \\
    \abs{\lambda_2(a,\delta,\varepsilon,\alpha,\beta,\psi')}^2 \leq \frac{32ab}{\Ra},
\end{eqnarray}
from which it it is obvious that $\alpha = \beta = 1/2$ is optimal. Then it suffices to take square roots and rearrange, such that we conclude that $\mathcal{S}_{\boldsymbol{k}} \geq 0$ if
%
%
%
\begin{equation}\label{e:spectral-constraint-sufficient-condition-general_simp}
    \delta^2\lVert a-\psi' \rVert_{L^{\infty}(0,\delta)} + \varepsilon^2\lVert a-\psi' \rVert_{L^\infty(1-\varepsilon,1)} \leq 
    \frac{8 \sqrt{2ab}}{ \sqrt{\Ra}}.
\end{equation}
This condition reduces to~\eqref{e:delta-inequality} when $\delta=\varepsilon$ and $\psi$ is as in~\eqref{phicomplex}.
\section{Proof of Proposition \ref{prop:counter} }
\label{appendix:proof_of_eqs}
%
It is assumed that $(a,b,\psi,q) \in \mathcal{A}\{\delta,\varepsilon\} \cap \mathcal{B}\{\gamma\} \cap \mathcal{C}\{\delta,\varepsilon,R\}$. The first step is to bound $\int_0^1 \psi(z) dz$ from above. To do this, we work back from $z=1$, using the assumptions to estimate $\psi$. 

Let $1-\varepsilon \leq z \leq 1$. Since $\psi \in C^1[0,1]$ and $\psi(1)=0$, the mean value theorem implies that there exist $z_\varepsilon \in (1-\varepsilon,1)$ such that 
\[
-\psi'(z_c) = \frac{\psi(z)}{1-z} \geq \frac{\psi(z)}{\varepsilon}.
\]
Since $(a,b,\psi(z))$ satisfy \eqref{e:spectral-constraint-sufficient-condition-general_simp}, it follows that 
\begin{eqnarray}
    \varepsilon^{4} \lVert a - \psi' \rVert^{2}_{L^{\infty}(1-\varepsilon,1)} \leq 128 \frac{ a\,b}{\Ra} &\implies& \varepsilon^{4} \left( a + \frac{\psi(z)}{\varepsilon}\right)^{2} \leq 128\frac{a\,b}{\Ra} \nonumber \\
    &\implies& \varepsilon^{4}\left( \frac{a}{b}\right)^{2} + \left( \frac{a}{b}\right)\left(\frac{2\varepsilon^{3}\psi(z)}{b} - \frac{128}{\Ra}\right) + \left( \frac{\varepsilon \psi(z)}{b}\right)^{2} \leq 0 \nonumber \\
    &\implies& \left( \frac{2\varepsilon^{3} \psi(z)}{b} - \frac{128}{\Ra}\right)^{2} \geq 4 \varepsilon^4 \left(\frac{\varepsilon\psi(z)}{b} \right)^{2} \nonumber \\ &\implies&
    \frac{\psi(z) \varepsilon^{3}}{b} \leq \frac{32}{\Ra} \label{e:psi_e_R}
\end{eqnarray}
Next, since $\psi(1)=0$, the assumption that $\|\psi'\|_{L^\infty(1-\varepsilon,1)} \leq 2b$ gives $|\psi(z)| \leq \varepsilon \|\psi'\|_{L^\infty(1-\varepsilon,1)} \leq 2 b \varepsilon$, which in turn implies
\begin{equation} \label{e:lb_psi_b_e}
 \varepsilon \geq \frac{1}{2b} \psi(z).
\end{equation}
Since $1-\varepsilon < z < 1$ was arbitrary in the above argument, it follows from \eqref{e:psi_e_R} and \eqref{e:lb_psi_b_e} and the fact that $\psi'(z) \geq 0$ for all $z \in [\delta,1-\varepsilon]$ that 
\begin{equation} \label{e:up_d_1}
\|\psi\|_{L^\infty(\delta,1)} \leq \|\psi\|_{L^\infty(1-\varepsilon,1)} \leq 4b R^{-\frac14}.
\end{equation}

We now estimate $\psi$ in the lower boundary layer $[0,\delta]$. The mean value theorem implies that there exists $z_\delta$ such that 
\[
\psi'(z_\delta) = \frac{\psi(\delta)-\psi(0)}{\delta}.
\]
Since $(a,b,\psi(z))$ satisfy \eqref{e:spectral-constraint-sufficient-condition-general_simp}, using the above equation and the assumption that $0 < a \leq b$, it then follows that 
\begin{eqnarray}
    \delta^{4}\lVert a- \psi' \rVert^{2}_{L^{\infty}(0,\delta)} \leq 128 \cdot \frac{a\, b}{\Ra} & \implies& \delta^{4} \left| a + \left(\frac{\psi(0)-\psi(\delta)}{\delta}\right)\right|^{2} \leq 128 \cdot \frac{a\,b}{\Ra}\, \nonumber \\
   & \implies &\delta^{2} |a\delta + \psi(0) - \psi(\delta)|^{2} \leq 128 \cdot \frac{b^2}{\Ra} \nonumber \\
   &\implies &\frac{\delta}{b} \psi(0) \leq \frac{8\sqrt{2}}{R^{\frac12}} + \frac{\delta \psi(\delta)}{b} \nonumber \\
   \text{(by \eqref{e:up_d_1})} & \implies& \frac{\delta}{b} \psi(0) \leq \frac{8\sqrt{2}}{R^{\frac12}} + \frac{4\delta}{R^{\frac14}} \label{eq:psi0_lb}
\end{eqnarray}
Using \eqref{e:up_d_1}, \eqref{eq:psi0_lb}, the assumption that $\psi'(z) \leq 0$ on $[0,\delta]$ and $R \geq 1$ gives
\begin{equation} \label{eq:psi_l1_bnd}
\frac{1}{b} \int_0^1 \psi(z) dz \leq \frac{\delta}{b} \psi(0) + \frac{(1-\delta)}{b} \|\psi\|_{L^\infty(\delta,1)} \leq \frac{8\sqrt{2}+4}{R^{\frac14}} 
\end{equation}

Next we consider the $L^2$ component of the cost function. Using the assumption that $\psi',q$ are constant in an interval $(1/2-\gamma,1/2+\gamma)$, 
\begin{eqnarray}
    \frac{1}{4b} \big\lVert  \psi' - q -b(z-1/2) \big\rVert^{2}_{2}\, &\geq& \, \frac{1}{4b} \int^{\tfrac{1}{2}+ \gamma}_{\tfrac{1}{2}-\gamma} \left[\psi' - q - b(z-1/2) \right]^{2}\, \textrm{d}z\, \nonumber \\
    &\geq &  \frac{b}{4} \int_{\frac12-\gamma}^{\frac12 + \gamma} (z-1/2)^2 \, \textrm{d}z  \nonumber \\
    &= &  \frac{b \gamma^{3}}{6}. 
\end{eqnarray}
Combining the above estimate with
\eqref{eq:psi_l1_bnd} gives the stated result
\begin{equation*}
    \frac{1}{4b} \big\lVert  \psi' - q -b(z-1/2) \big\rVert^{2}_{2} - \int_0^1 \psi(z) dz \geq \frac{b}{6}\left( \gamma^3 - 24 \cdot  \frac{1+2\sqrt{2}}{R^{\frac14}} \right). 
\end{equation*}


\end{document}